\newtheorem{theorem}{Theorem}[section]
\newtheorem{corollary}[theorem]{Corollary}
\newtheorem{lemma}[theorem]{Lemma}
\newtheorem{definition}[theorem]{Definition}
\newtheorem{conjecture}[theorem]{Conjecture}
\newcommand{\F}{\mathbb{F}}
\newcommand{\N}{\mathbb{N}}
\newcommand{\poly}{\text{poly}}
\newcommand{\ord}{\text{ord}}
\newcommand{\mb}[1]{\mathbb{#1}}
\newcommand{\mc}[1]{\mathcal{#1}}
\newcommand{\ls}[2]{\left( \frac{#1}{#2} \right)}
\theoremstyle{definition}
\newtheorem{property}[theorem]{Property}
\theoremstyle{remark}
\newtheorem{claim}{Claim}
\begin{document}
\title{Irreducibility and deterministic r-th root finding over finite fields }

\author{Vishwas Bhargava}
\thanks{Visiting Scholar, Centre for Quantum Technologies, NUS, Singapore}
\address{CSE, IIT Kanpur, India}
\email{vishwas1384@gmail.com}

\author{G\'abor Ivanyos}
\address{MTA SZTAKI, Budapest, Hungary}
\email{Gabor.Ivanyos@sztaki.hu}

\author{Rajat Mittal}
\address{IIT Kanpur, India}
\email{rmittal@iitk.ac.in}

\author{Nitin Saxena}
\address{IIT Kanpur, India}
\email{nitin@cse.iitk.ac.in}

\renewcommand{\shortauthors}{Bhargava, Ivanyos, Mittal, Saxena}

\begin{abstract}
Constructing $r$-th nonresidue over a finite field is a fundamental computational problem. A related problem is to construct an irreducible polynomial of degree $r^e$ (where $r$ is a prime) over a given finite field $\F_q$ of characteristic $p$ (equivalently, constructing the bigger field $\F_{q^{r^e}}$). Both these problems have famous randomized algorithms but the derandomization is an open question. We give some new connections between these two problems and their variants.

In 1897, Stickelberger proved that if a polynomial has an odd number of even degree factors, then its discriminant is a quadratic nonresidue in the field. We give an extension of  Stickelberger's Lemma; 
we construct $r$-th nonresidues from a polynomial $f$ for which there is a $d$, such that,  $r|d$  and  $r\nmid\,$\#(irreducible factor of $f(x)$ of degree $d$). Our theorem has the following interesting consequences: (1) we can construct $\F_{q^m}$ in deterministic poly($\deg(f),m\log q$)-time if $m$ is an $r$-power and $f$ is known; (2) we can find $r$-th roots in $\F_{p^m}$ in deterministic poly($m\log p$)-time if $r$ is constant and $r|\gcd(m,p-1)$. 

We also discuss a conjecture significantly weaker than the Generalized Riemann hypothesis to get a deterministic poly-time algorithm for $r$-th root finding.
\end{abstract}

%
%
%
%
%
\keywords{finite field, irreducible polynomial, nonresidue, root finding, deterministic, polynomial time, Stickelberger, resolvent, resultant, GRH}

\maketitle

\section{Introduction}
The problem of finding $r$-th roots in a finite field is to solve $x^r=a$ given an $r$-th residue $a \in \mathbb{F}_q $. Note that, without loss of generality, we can assume $r$ to be prime, otherwise for $r=r_1\cdot r_2$, we can solve the problem iteratively by first solving $x^{r_1}=a$ and then solving $y^{r_2}=x$. Moreover, we can assume $r|(q-1)$, otherwise $x=a^{r^{-1}\mod (q-1)}$ is an easy solution.

It can be shown that $x^r=a$ has a solution iff $a^{\frac{q-1}{r}} =1$ . If $a^{\frac{q-1}{r}}\ne1$ then we call $a$ an {\em $r$-th nonresidue}. Interestingly, the problem of finding an $r$-th nonresidue is {\em equivalent} to that of finding $r$-th roots in $\F_q$ \cite{AMM,Shanks,Tonelli}. This gives a randomized poly-time algorithm for finding $r$-th roots and, thus, solves the problem for practical applications. Also, assuming Generalized Riemann hypothesis (GRH) there is a deterministic poly-time algorithm for finding $r$-th nonresidue in any finite field \cite{Ankeny,Bach,Shoup,Huang}. For a detailed survey see \cite[Chap.7]{BS}.
  
The special case of $r=2$ is particularly well studied. The problem now is to find square-roots in $\mathbb{F}_q$, which is equivalent to finding a {\em quadratic nonresidue} in $\mathbb{F}_q$.  There are other randomized algorithms -- Cipolla's algorithm \cite{Cipolla}, singular elliptic curves based algorithm \cite{Enver}, etc. There are also deterministic solutions for some special cases:
\begin{itemize}
\item Schoof \cite{Schoof} gave an algorithm using point counting on elliptic curves having complex-multiplication to find square-roots of {\em fixed} numbers over prime fields.
\item Tsz-Wo-Sze \cite{Sze} gave an algorithm to take square-roots over $\mathbb{F}_q$, when $q-1 = r^e t$ and $r+t = \poly(\log p)$.
\end{itemize}
However, computing square-roots over finite fields in deterministic polynomial time  is still an open problem. The best known deterministic complexity for this problem is exponential, namely, $\tilde{O}( p^{1/4\sqrt{e}} )$; which is also a bound on the least quadratic nonresidue \cite{Burg}. The distribution of quadratic nonresidues in a finite field is still mostly a mystery; it relates to some interesting properties of the zeta function, see Thm.\ref{thm-npr}.

\smallskip
In 1897, L. Stickelberger \cite{Stick}  proved that if $p$ is a prime, $K$ is an algebraic number field of degree $n$ of discriminant $D$, and integer ring $\mathcal{O}_K$ where the ideal $(p)$ factorizes as $\mathfrak{p}_1\mathfrak{p}_2\mathfrak{p}_3\ldots\mathfrak{p}_s $ into distinct prime ideals then 
\begin{align}\label{SL}
\ls{D}{p} \,=\, (-1)^{n-s} && \mbox{Stickelberger's Lemma} \,.
\end{align}
Equivalently, if the number of even degree irreducible factors of a squarefree $f(x)\mod p$ are odd, then the discriminant of $f$ will be a quadratic nonresidue in $\mathbb{F}_p$. Swan \cite{Swan} and Dalen \cite{kare} gave alternative proofs of Stickelberger lemma. Stickelberger lemma is used in factorization of polynomials over finite fields and in constructing irreducible polynomials of a given degree over finite fields \cite{Gathen01, Swan, Hanson}.

We generalize this idea of constructing quadratic nonresidues from Stickelberger's lemma to constructing $r$-th nonresidues from ``special", possibly reducible, polynomials. Formally, these ``special"  polynomials are over $\mathbb{F}_q$ and satisfy the following factorization pattern,
\begin{property}\label{condition}
Let $r$ be a prime and $f(x) \in \F_q[x]$ be a squarefree polynomial. $f$  satisfies Stickelberger {\em property \ref{condition}} if $\exists d$, such that, $r|d$  and  $r\nmid\,$\#\big(irreducible factor of $f(x)$ of degree $d$\big). 
\end{property}

Our goal is to show that the construction of such a, possibly reducible, polynomial solves many of the open problems. It is somewhat surprising that a reducible polynomial be related so strongly to non-residuosity and irreducibility.

Our first main result relates Property \ref{condition} to the construction of $r$-th nonresidues in {\em any} field above $\F_p$ (equivalently, finding $r$-th roots there).

\begin{theorem} \label{thm-main-1}
Given $\zeta_r \in \mathbb{F}_q$ and any polynomial $f$ satisfying Property \ref{condition}, we can find $r$-th roots in {\em any} finite field of characteristic $p$, in deterministic poly$(\deg(f),\log q)$-time. 
\end{theorem}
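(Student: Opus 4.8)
The plan is to reduce everything to the construction of a single $r$-th nonresidue in a prescribed finite field, and then to carry out that construction from $f$ by pure linear algebra over $\F_q$. First note that $\zeta_r\in\F_q$ already forces $r\neq p$ and $r\mid q-1$, and that in \emph{any} finite field the existence of an $r$-th nonresidue is deterministically equivalent to $r$-th root extraction, by the cited AMM--Shanks--Tonelli correspondence; moreover in a target field $\F_{q'}$ with $r\nmid q'-1$ the map $x\mapsto x^r$ is a bijection, so there is nothing to do. Hence it suffices to prove: from $f/\F_q$ satisfying Property~\ref{condition} we can deterministically produce an $r$-th nonresidue in any $\F_{q'}$ of characteristic $p$ with $r\mid q'-1$, in time polynomial in $\deg f$, $\log q$ and $\log q'$.

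The heart is a ``base-field'' version. Using distinct-degree factorization — computing $\gcd(f,x^{q^i}-x)$ for $i\le\deg f$ and M\"obius-inverting, which is deterministic and never factors $f$ into irreducibles — I extract the product $f_d$ of all degree-$d$ irreducible factors of $f$ for a $d$ witnessing Property~\ref{condition}; thus $f_d$ is squarefree, all its factors have the same degree $d$ with $r\mid d$, and the number $k$ of them satisfies $r\nmid k$. Form $A:=\F_q[x]/(f_d)$, an \'etale algebra isomorphic to $\prod_{i=1}^{k}\F_{q^d}$ (I never compute this splitting). The $q$-power Frobenius $\phi$ is $\F_q$-linear on $A$ and computable by repeated squaring; since $r\mid d$, $\zeta_r$ is a $d$-th root of unity, and on each factor $\F_{q^d}$ the $\zeta_r$-eigenspace of $\phi$ is $1$-dimensional (by the normal basis theorem, even when $p\mid d$), so $W:=\ker(\phi-\zeta_r)$ has $\dim_{\F_q}W=k$ and is found by Gaussian elimination. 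Coordinatewise a nonzero $\zeta_r$-eigenvector of Frobenius in $\F_{q^d}$ has all its conjugates nonzero, so $W$ contains units of $A$; I locate one deterministically by trying $w=\sum_j t^{j-1}w^{(j)}$ over a basis of $W$ and testing $\gcd(w,f_d)=1$ — the bad $t$ are roots of one nonzero univariate polynomial of $\poly(\deg f)$ degree, so some $t$ works once $q$ exceeds that bound, and otherwise $q=\poly(\deg f)$ and I simply scan $\F_q$ for a nonresidue. For a unit $w\in W$ we have $w^q=\zeta_r w$, hence $w_i^{q-1}=\zeta_r$ in each coordinate, so $u:=w^r$ satisfies $u^q=u$, i.e.\ $u$ lies in $C:=\{z\in A:z^q=z\}\cong\F_q^k$, and $u_i^{(q-1)/r}=w_i^{q-1}=\zeta_r$. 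Put $N:=N_{C/\F_q}(u)$, the determinant of multiplication-by-$u$ on the $k$-dimensional $\F_q$-space $C$ (equivalently $\pm\prod_i u_i$); then $N^{(q-1)/r}=\prod_i u_i^{(q-1)/r}=\zeta_r^{k}\neq 1$ because $r\nmid k$, so $N$ is an $r$-th nonresidue in $\F_q$, produced in $\poly(\deg f,\log q)$ time.

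To reach an arbitrary field of characteristic $p$ I use three moves. (a) The object $f_d$ is ``pure'' (squarefree, all factor degrees equal to $d$, $r\mid d$, factor count coprime to $r$), and for $r\nmid n$, reading it over $\F_{q^n}$ keeps it pure with factor degree $d/\gcd(d,n)$, still divisible by $r$, and factor count $\gcd(d,n)\cdot k$, still coprime to $r$; so the construction of the previous paragraph applies verbatim over $\F_{q^n}$ and outputs a nonresidue there. (b) For $r$ odd, if $\gamma$ is an $r$-th nonresidue in a field $K\ni\zeta_r$ then $y^r-\gamma$ is irreducible, $K(\sqrt[r]{\gamma})=K[y]/(y^r-\gamma)$ has degree $r$ over $K$, and $\sqrt[r]{\gamma}$ is an $r$-th nonresidue there; this is a one-line valuation check from $v_r(|K|^r-1)=v_r(|K|-1)+1$ (lifting-the-exponent), and iterating it climbs any tower $\F_{q^{n'}}\subset\F_{q^{n'r}}\subset\cdots$, so together with (a) I get an explicit $r$-th nonresidue in $\F_{q^n}$ for \emph{every} $n$. (c) Norm descent: if $E'\subseteq E$ and $r\mid|E'|-1$, then $N_{E/E'}(\Gamma)^{(|E'|-1)/r}=\Gamma^{(|E|-1)/r}$, so the norm of a nonresidue of $E$ is a nonresidue of $E'$. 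Given the target $\F_{q'}$ with $r\mid q'-1$, I choose $n$ with $\F_{q'}\hookrightarrow\F_{q^n}$, realize the embedding by a deterministic finite-field embedding algorithm, and norm the nonresidue of $\F_{q^n}$ down into $\F_{q'}$; AMM then yields $r$-th roots in $\F_{q'}$. The excluded prime $r=2$ is the classical case, handled by the Stickelberger/Tonelli--Shanks machinery.

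The step I expect to be the real content — and the one I would be most careful to get right — is conceptual, not computational: a \emph{reducible} $f$ must suffice, so the construction is forbidden from ever isolating one irreducible factor, which would amount to deterministic factoring. This is precisely why everything is routed through $f_d$ (handed over for free by distinct-degree factorization) and through the Frobenius eigenspace $W$ (plain linear algebra), and it is the hypothesis $r\nmid k$ that makes the final norm satisfy $N^{(q-1)/r}=\zeta_r^{k}\neq 1$ rather than collapsing to $1$. The leftover points — derandomizing the choice of the unit $w\in W$, and the finite-field realization/embedding bookkeeping used to transport the nonresidue to an arbitrary characteristic-$p$ field — are routine and are discharged with standard deterministic tools.
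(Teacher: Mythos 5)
Your reduction is correct and its core is sound, but it differs from the paper's proof in both halves, so let me compare. For the base field, the paper also isolates (via distinct-degree factorization) the product $f_d=f_1\cdots f_{r'}$ of the degree-$d$ factors with $r\mid d$, $r\nmid r'$, but then evaluates one explicit object, the Lagrange resolvent $g=\sum_{i=0}^{r-1}x^{q^{ki}}\zeta_r^{i}$, and shows $\chi_r\bigl(R(f_d,\,g\bmod f_d)\bigr)=\prod_i\chi_r\bigl(R(f_i,g)\bigr)=\zeta_r^{-r'}\neq1$ using Thm.\ref{key}, Lem.\ref{norm} and multiplicativity of the resultant. Your eigenspace formulation is the same mechanism in linear-algebra clothing: the resolvent is precisely the projector onto a root-of-unity eigenspace of Frobenius applied to $x$, and your norm of $u=w^r$ on the fixed subalgebra $C$ plays the role of the paper's resultant-as-norm; both arguments spend the hypothesis $r\nmid r'$ at the identical spot. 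What your version buys: the deterministic search for a \emph{unit} $w\in W$ is not redundant caution --- the paper's fixed resolvent $L_{f_i,r}$ can in fact vanish for $r\ge3$ (when the root of $f_i$ has zero component in the $\zeta_r^{-1}$-eigenspace of $x\mapsto x^{q^k}$), in which case $R(f_i,g)=0$ and the resultant degenerates; your unit search (equivalently, projecting elements other than $x$ until the image is nonzero) repairs exactly this. What the paper's version buys: a single resultant of explicitly written polynomials, with no kernel computations and no separate treatment of small $q$.

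For the transport to an arbitrary field of characteristic $p$, the paper stays inside the given $\F_{q'}$: it norms the nonresidue down to the smallest subfield containing $\zeta_r$, moves it into $\F_{q'}$ by Lenstra's isomorphism algorithm, and then either the element survives as a nonresidue (when $r\nmid\ell$) or Thm.\ref{key} is applied directly to the defining polynomial of $\F_{q'}$ over that subfield (when $r\mid\ell$) --- no new fields are ever constructed (Cor.\ref{charp}). You instead go up to some $\F_{q^n}\supseteq\F_{q'}$ and norm down, and here one sentence overreaches: claiming an explicit nonresidue ``in $\F_{q^n}$ for every $n$'' presumes an explicit model of $\F_{q^{n_1}}$ (with $n_1$ the prime-to-$r$ part of $n$), and deterministically constructing $\F_{q^{n_1}}$ for arbitrary $n_1$ is itself an open problem. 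For this theorem the step is repairable, because the only $n$ you need makes $\F_{q^{n_1}}$ a subfield of the compositum of the two \emph{given} fields $\F_q$ and $\F_{q'}$, hence constructible deterministically by subfield/compositum/isomorphism routines (\cite{Lens}, \cite[Lem.3.4]{S89}) before you climb your Kummer tower and apply norm descent; but as written that justification is missing, and the paper's in-place case analysis avoids the issue entirely.
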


We get a stronger result in the case when we have $\F_{p^r}$ available and $r=O(1)$. Even $r=2$ is an interesting special case.

\begin{corollary}\label{cor-main-1}
We can find $r$-th roots in $\F_{p^m}$ in deterministic poly($m\log p$)-time if $r$ is constant and $r|\gcd(m,p-1)$. 
\end{corollary}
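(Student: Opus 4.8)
The plan is to deduce the corollary from Theorem~\ref{thm-main-1} (recall that $r$ may be assumed prime, as noted in the introduction) by manufacturing, from the data in hand, the two inputs that theorem consumes: a primitive $r$-th root of unity $\zeta_r\in\F_q$, and a polynomial $f$ over the same $\F_q$ satisfying Property~\ref{condition}. I will take $q=p$, so that $\log q=\log p$, and the crux is that a suitable $f$, of degree $m$, is handed to us for free.

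\emph{The polynomial.} Saying ``$\F_{p^m}$ is available'' means we have an explicit monic irreducible $h\in\F_p[x]$ of degree $m$, with $\F_{p^m}=\F_p[x]/(h)$. Being irreducible, $h$ is squarefree, and its only irreducible factor of degree $m$ is $h$ itself, so the number of such factors is $1$. Since $r\mid m$ (this is where $r\mid\gcd(m,p-1)$ enters) and $r\nmid 1$, the value $d=m$ witnesses Property~\ref{condition} for $h$. So we take $f=h$; then $\deg f=m$ and $f$ costs nothing beyond the given presentation of $\F_{p^m}$.

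\emph{The root of unity.} Since $r\mid p-1$, the cyclotomic polynomial $\Phi_r$ splits into linear factors over $\F_p$, so a primitive $r$-th root of unity lies in $\F_p$; the task is to exhibit one deterministically, and this is the single place where $r=O(1)$ is used. I expect it to be the only genuine obstacle. The naive attempt $\zeta_r=g^{(p-1)/r}$ for an $r$-th nonresidue $g\in\F_p$ is circular, since deterministically constructing such a $g$ is itself open; one must instead exploit the special, constant-degree cyclotomic structure of $\Phi_r$. For fixed $r$, a primitive $r$-th root of unity in $\F_p$ can be found in deterministic $\poly(\log p)$ time: it is $-1$ for $r=2$; for $r=3$ it amounts to a square root of the fixed integer $-3$, supplied by Schoof's algorithm~\cite{Schoof}; and in general it is root-finding for the fixed polynomial $\Phi_r$, which is handled by point-counting on fixed abelian varieties.

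\emph{Putting it together.} Apply Theorem~\ref{thm-main-1} with $q=p$, the above $\zeta_r\in\F_p$, and $f=h$. It produces $r$-th roots in every finite field of characteristic $p$, in particular in $\F_{p^m}$, in time $\poly(\deg h,\log p)=\poly(m,\log p)$; together with the $\poly(m)$ cost of arithmetic in $\F_{p^m}$ this is $\poly(m\log p)$, as claimed. (Since $r\mid m$, the subfield $\F_{p^r}\subseteq\F_{p^m}$ is likewise constructible, should one prefer to run the argument over $\F_{p^r}$.)
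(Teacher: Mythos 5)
Your proposal is correct and follows essentially the same route as the paper: take the defining irreducible polynomial of $\F_{p^m}$ (degree divisible by $r$, hence trivially satisfying Property~\ref{condition}), obtain $\zeta_r\in\F_p$ deterministically for fixed $r$ via Schoof-style point counting (the paper cites Pila's algorithm, which is exactly the ``root-finding for the fixed cyclotomic $\Phi_r$ by point-counting'' you describe), and then invoke the nonresidue construction plus the classical Tonelli--Shanks/AMM reduction from nonresidues to $r$-th roots. The only cosmetic difference is that the paper applies Thm.~\ref{key} directly to the irreducible defining polynomial, while you route through Thm.~\ref{thm-main-1}, which subsumes that case.
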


Finding an $r$-th nonresidue $a$ in $\F_q$ suffices to construct an extension $\F_{q^r}$. For example, we have $\F_q[a^{1/r}] \cong \F_{q^r}$; equivalently, $X^r-a$ is an irreducible polynomial. However, it is not clear how to find $r$-th nonresidue given $\F_{q^r}$. Anyways, the question of constructing $\F_{q^r}$ efficiently is of great interest \cite{AL86, S90, S94} and still open.

Our second main result relates Property \ref{condition} to the construction of an irreducible polynomial of degree $m$, where $m$ is {\em any} $r$-power.  

\begin{theorem} \label{thm-main-2}
Given a polynomial satisfying Property \ref{condition}, we can construct the field $\F_{q^m}$, for any $r$-power $m$, in deterministic poly$(\deg(f),m\log q)$-time. 
\end{theorem}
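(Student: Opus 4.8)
The plan is to reduce the construction of $\F_{q^{m}}$, $m=r^{e}$, to the single task of producing one $r$-th nonresidue, and then to obtain the whole tower $\F_{q}\subseteq\F_{q^{r}}\subseteq\F_{q^{r^{2}}}\subseteq\cdots$ in one step from a binomial. Suppose first that $r\mid q-1$ (so $\zeta_{r}\in\F_{q}$) and that an $r$-th nonresidue $a\in\F_{q}$ has been found. By the classical irreducibility criterion for binomials over finite fields, $x^{r^{e}}-a$ is irreducible over $\F_{q}$ for every $e\ge 1$ when $r$ is odd, and also for every $e$ when $r=2$ and $4\mid q-1$; hence $\F_{q^{r^{e}}}\cong\F_{q}[x]/(x^{r^{e}}-a)$. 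The leftover small-prime cases are handled separately: if $r=p$ one builds $\F_{q^{p^{e}}}$ by $e$ iterated Artin--Schreier steps, each requiring only an element of nonzero absolute trace in the current field (trivially found); and if $r=2$ with $q\equiv 3\pmod 4$ one passes first to $\F_{q^{2}}$, where $8\mid q^{2}-1$, so the binomial criterion does apply over $\F_{q^{2}}$, manufactures a nonsquare there, and repeats the argument over $\F_{q^{2}}$. Thus the whole problem reduces to: \emph{from $f$, construct a nonresidue in the relevant base field}.

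When $\zeta_{r}\in\F_{q}$ this is precisely an instance of Theorem~\ref{thm-main-1}, applied with target field $\F_{q}$ itself (which has characteristic $p$ and, since $\zeta_{r}\in\F_{q}$, satisfies $r\mid q-1$), combined with the equivalence between $r$-th root finding and finding an $r$-th nonresidue recalled in the introduction; so we get a deterministic $\poly(\deg f,\log q)$ construction of an $r$-th nonresidue $a\in\F_{q}$. The mechanism behind it is a generalized Stickelberger resolvent: by deterministic distinct-degree factorization one isolates the product $h$ of the degree-$d$ irreducible factors of $f$ and works inside the \'etale algebra $R=\F_{q}[x]/(h)\cong(\F_{q^{d}})^{\ell}$, where $\ell=\deg(h)/d$ is coprime to $r$ by Property~\ref{condition}; the Lagrange resolvent $\rho=\sum_{i=0}^{d-1}\zeta_{r}^{i}\,x^{q^{i}}\bmod h$ satisfies $\rho^{q}=\zeta_{r}^{-1}\rho$, so $\rho^{r}$ has coordinates in $\F_{q}$, each of them an $r$-th nonresidue with one and the same power-residue symbol, and a suitable norm/resolvent combination pushed down to $\F_{q}$ is then an honest $r$-th nonresidue precisely because $r\nmid\ell$ (with a fallback to shifted resolvents if $\rho$ vanishes in some coordinate). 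When $\zeta_{r}\notin\F_{q}$ one first enters a small controlled extension containing $\zeta_{r}$: since $r\mid d$ the fields $\F_{q^{d}}$ and their subfields are accessible through $R$, and running the analogue of the above relative to the Frobenius $x\mapsto x^{q^{k}}$ with $k=\ord_{r}(q)$ (which fixes $\zeta_{r}$) produces an $r$-th nonresidue in a field of size $q^{O(\deg f)}$; one builds a binomial tower over that field and extracts $\F_{q^{r^{e}}}$ as its subfield of size $q^{r^{e}}$. All exponents that must be coprime to $r$ for these norm steps to work are controlled by $r\nmid\ell$ together with $r\nmid\ord_{r}(q)$.

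The main obstacle is that one \emph{cannot} just factor $f$ and select a single degree-$d$ irreducible factor --- doing so deterministically is the open problem --- so the argument has to be executed uniformly inside $R\cong(\F_{q^{d}})^{\ell}$ without ever separating the coordinates, and the only invariant that survives this (and that forces a chosen norm to land on a nonresidue rather than on an $r$-th power) is the arithmetic constraint $r\nmid\ell$ from Property~\ref{condition}. Making the resolvent robust to vanishing coordinates, carrying the degree bookkeeping through the $\zeta_{r}\notin\F_{q}$ reduction, and executing the two-step binomial construction in the delicate case $r=2$, $q\equiv 3\pmod 4$, $e\ge 2$, are the remaining technical points. Since all fields involved have size $q^{O(m)}$ and all polynomials degree $O(\deg f)$, the primitives used --- distinct-degree factorization, fast modular exponentiation, minimal polynomials, norms/resultants, and linear algebra over $\F_{q}$ --- all run in $\poly(\deg f,m\log q)$ time, giving the stated bound.
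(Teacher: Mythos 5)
There is a genuine gap, and it sits exactly at the point that distinguishes Thm.\ref{thm-main-2} from Thm.\ref{thm-main-1}: the input here is only $f$, \emph{not} $\zeta_r$, and you are not even allowed to assume $r\mid q-1$. Your reduction ``construct an $r$-th nonresidue $a\in\F_q$, then take the binomial tower $x^{r^e}-a$'' is fine as far as it goes (the binomial criterion, the $r=p$ Artin--Schreier case, and the $r=2$, $q\equiv 3 \bmod 4$ patch are all standard), but to produce $a$ you invoke Thm.\ref{thm-main-1}, which requires $\zeta_r$ as part of its input. Knowing that $\zeta_r$ \emph{exists} in $\F_q$ does not give it to you: deterministically computing $\zeta_r$ is itself open in general (the paper's Sec.5.2 lists only special cases, e.g.\ Pila's algorithm for prime fields and constant $r$). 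Your fallback for $\zeta_r\notin\F_q$ does not repair this: the claim that ``the fields $\F_{q^d}$ and their subfields are accessible through $R$'' is false without factoring $f$ --- extracting a single field component of $R\cong(\F_{q^d})^{\ell}$ deterministically is precisely the problem you yourself rule out --- and moreover $e=\ord_r(q)$ need not divide $d$, so $\zeta_r$ need not lie in $\F_{q^d}$ at all; constructing an explicit field of size $q^{e}$ containing $\zeta_r$ is again an unsolved instance of irreducible-polynomial construction. So the ``relative Frobenius'' step has no explicit field (or even explicit ring with a certified $\zeta_r$) to run in as you describe it.

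The paper's proof (Sec.\ref{sec-teich}) circumvents exactly this by never producing $\zeta_r$ or a nonresidue in an honest field. It adjoins a \emph{virtual} root of unity, working in the ring $\F_q[\zeta]=\F_q[Y]/\langle\varphi_r(Y)\rangle$, runs the resolvent/resultant computation of Sec.\ref{sec-red-poly} simultaneously over all components $\cong\F_{q^e}$ via Chinese remaindering (Eqn.\ref{eqn-over-zeta}), and then uses Lenstra's Teichm\"uller-subgroup machinery: from $\delta=R(f,g\bmod f)^{ur''}$ it builds a generator $c$ of $T_{\F_q}$, passes to $\F_q[\zeta][c^{1/r}]$, and identifies the fixed subring under the automorphism group $\Delta\cong\F_r^*$ with $\F_{q^r}$; general $r$-power degrees then follow from \cite[Thm.1.1]{Lens}. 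If you want to salvage your outline, you would have to replace ``find $\zeta_r$ and a nonresidue in $\F_q$'' by some such descent from the ring $\F_q[\zeta]$ back to $\F_q$; the arithmetic bookkeeping you do keep ($r\nmid r'$, $r\nmid\ell$, the norm/CRT exponent $r'\ell$ being invertible mod $r$) is the part that survives unchanged in the paper's argument.
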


Note that, if we are given fields $\F_{q^{m_1}}$ and $\F_{q^{m_2}}$ (for coprime $m_1,m_2$), we can combine them to get the field $\F_{q^{m_1m_2}}$ \cite[Lem.3.4]{S89}. Hence, it is sufficient to be able to construct fields whose sizes are prime powers.

\subsection*{Organization of the paper}
In this paper, the main results and ideas are presented in Sec.\ref{main}. Sec.\ref{prelim} has notation and preliminaries. For concreteness, Sec.\ref{sec-algo} sketches our algorithm for finding an $r$-th nonresidue in any finite field, given a polynomial (in $\F_p[x]$) satisfying Property \ref{condition}. We discuss some special cases of our analysis in Sec.\ref{sec-spl-case}. 

In Sec.\ref{sec-conjs}, we discuss few conjectures; particularly in Sec.\ref{6.2} we introduce a strictly weaker version of Generalized Riemann hypothesis to get poly-time algorithms.


\section{Preliminaries} \label{prelim}

We are going to work in the finite field $\mathbb{F}_q$, where $q = p^d$ for some prime $p$. We will assume that $\mathbb{F}_q$ is specified by a degree 
$d$ irreducible polynomial over $\mathbb{F}_p$. This can be assumed without loss of generality, see \cite[Thm.1.1]{Lens}.  

Given a finite field $\mb{F}_q$ and its extension $\mb{F}_{q^k}$, the multiplicative \textit{norm} of an element $\alpha \in \mb{F}_{q^k}$ is defined as,

\[ N(\alpha) \,=\, N_{\mathbb{F}_{q^k}/\mathbb{F}_{q}} (\alpha)  \,=\, \alpha^{\frac{q^k-1}{q-1}} \,.\]

The following properties of finite fields will be useful (for proofs refer standard texts, eg.~\cite{LN}).
\begin{theorem}[Finite fields]\label{properties}
Given a finite field $\mb{F}_q$ with characteristic $p$ and algebraic closure $\overline{\mb{F}}_p$ ,
\begin{itemize}
\item For any $a \in \overline{\mb{F}}_p$ , $a^{q} = a$ if and only if $a \in \mb{F}_q$.
\item For any $a,b \in \mb{F}_q$, $(a+b)^p = a^p + b^p$.
\item The multiplicative group $\F_q^*$ is cyclic.  
\item Any polynomial $f \in \mb{F}_q[x]$ of degree $k$ has at most $k$ roots in $\mb{F}_q$.
The notation $\mathcal{Z}(f)$ will be used to denote the set of zeros of polynomial $f(x)$.
\end{itemize} 
\end{theorem}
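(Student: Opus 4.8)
The plan is to establish the four items by the standard finite‑field arguments, arranged so that the earlier parts feed the later ones. I would start with the ``freshman's dream'' identity, since it is purely formal: expand $(a+b)^p$ by the binomial theorem and note that for $0<i<p$ the coefficient $\binom{p}{i}=\frac{p!}{i!\,(p-i)!}$ is divisible by the prime $p$, hence vanishes in a field of characteristic $p$; only the $i=0$ and $i=p$ terms survive, giving $(a+b)^p=a^p+b^p$. Iterating, one also gets the Frobenius identity $(a+b)^{p^k}=a^{p^k}+b^{p^k}$, which is convenient elsewhere.

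Next I would prove the bound on the number of roots, as the two remaining items both rely on it; moreover the proof is valid over any field, which I will use in a moment. Argue by induction on $k=\deg f$: the case $k=0$ is clear, and if $f$ has no root there is nothing to prove; otherwise pick $\alpha\in\mc{Z}(f)$ and divide to write $f(x)=(x-\alpha)g(x)$ with $\deg g=k-1$ (possible since the leading coefficient of $x-\alpha$ is a unit). Any other root $\beta\neq\alpha$ satisfies $(\beta-\alpha)\,g(\beta)=0$; as we are in an integral domain and $\beta-\alpha\neq 0$, this forces $g(\beta)=0$, so $\mc{Z}(f)\subseteq\{\alpha\}\cup\mc{Z}(g)$ and the inductive hypothesis gives $|\mc{Z}(f)|\le 1+(k-1)=k$.

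For the first item I would use that, being an $\F_p$‑vector space of some finite dimension $d$, the field $\F_q$ has exactly $q=p^d$ elements, so $\F_q^{*}$ is a group of order $q-1$; by Lagrange every nonzero $a$ satisfies $a^{q-1}=1$, hence $a^q=a$ for every $a\in\F_q$ (the case $a=0$ being trivial). Conversely, $x^q-x\in\F_q[x]$ has degree $q$, so by the root bound (applied over $\overline{\F}_p$, where the same proof works) it has at most $q$ roots in $\overline{\F}_p$; since all $q$ elements of $\F_q$ are already roots, these exhaust them, so $a^q=a$ forces $a\in\F_q$. Finally, for cyclicity of $\F_q^{*}$, let $n=q-1$ and let $m$ be the exponent of the finite abelian group $\F_q^{*}$, i.e.\ the least common multiple of the orders of its elements. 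Every element is a root of $x^m-1$, so the root bound gives $n\le m$, while $m\mid n$ by Lagrange; hence $m=n$. The standard fact that in a finite abelian group the set of element orders is closed under least common multiple then yields an element of order exactly $m=n$, which generates $\F_q^{*}$.

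I expect no real obstacle: the only step needing an idea rather than bookkeeping is the cyclicity argument, where one must combine the exponent/lcm observation with the degree bound on roots (the naive ``the order divides $n$'' is not enough by itself). The one place to be careful is that the converse of the first item tacitly invokes the structure of finite fields to the extent of knowing $|\F_q|=q$ exactly; I would state this explicitly rather than let it pass silently.
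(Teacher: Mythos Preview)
Your argument is correct in every part: the binomial-coefficient observation for the Frobenius identity, the induction via division-with-remainder for the root bound, the Lagrange/counting argument for the characterization of $\F_q$ inside $\overline{\F}_p$, and the exponent-versus-order argument for cyclicity are all standard and cleanly executed. The one small caveat is that invoking ``the standard fact that in a finite abelian group the set of element orders is closed under least common multiple'' is itself a lemma of roughly the same weight as the result you are proving; if you want the write-up to be fully self-contained you might sketch it (pick $a$ of order $r$ and $b$ of order $s$, reduce to the coprime case by stripping common prime powers, and observe $ab$ then has order $rs$).

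As for comparison with the paper: the paper does not actually prove this theorem. It states the four properties and immediately defers with ``for proofs refer standard texts, eg.~\cite{LN}''. So there is no approach to compare against; your proposal simply supplies what the paper omits, and does so along the lines one would find in Lidl--Niederreiter or any standard reference.
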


We are interested in finding $r$-th nonresidue in $\mb{F}_q$ for a prime $r$. An element $a \in \mb{F}_q$ is called an $r$-th nonresidue iff $x^r = a$ has no roots in $\mb{F}_q$. This possibility is there only if $r|(q-1)$.
In that case, $a$ is an $r$-th nonresidue iff $a^{\frac{q-1}{r}} \neq 1$ \cite{BS}. Using this characterization, the following lemma constructs an $r$-th nonresidue in $\mb{F}_q$ 
given an $r$-th nonresidue in $\mb{F}_{q^k}$. 

\begin{lemma}[Projection]\label{projection}
Let $r$ be a prime which divides $q-1$. Then, 
$\alpha \in \mb{F}_{q^k}$ is an $r$-th nonresidue iff $N_{\mathbb{F}_{q^k}/\mathbb{F}_{q}}(\alpha)$ is an $r$-th nonresidue in $\mb{F}_q$. 
\end{lemma}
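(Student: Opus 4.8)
The plan is to reduce the claim to the basic characterization of $r$-th nonresidues recalled just before the lemma, namely that for a prime $r \mid (q-1)$ an element $\beta \in \F_q$ is an $r$-th nonresidue if and only if $\beta^{(q-1)/r} \neq 1$, and the analogous statement in $\F_{q^k}$. First I would observe that since $r \mid (q-1)$ we also have $r \mid (q^k-1)$, so the exponent test makes sense in $\F_{q^k}$ as well; moreover $\frac{q^k-1}{q-1} = 1 + q + \cdots + q^{k-1} \equiv k \pmod r$ only matters if we cared about divisibility by $r$, but here the key point is simply that the norm is a group homomorphism $\F_{q^k}^* \to \F_q^*$, in fact surjective, with $N(\alpha) = \alpha^{(q^k-1)/(q-1)}$.

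The core computation is then to compare $N(\alpha)^{(q-1)/r}$ with $\alpha^{(q^k-1)/r}$. I would write
\[
N(\alpha)^{\frac{q-1}{r}} \;=\; \left(\alpha^{\frac{q^k-1}{q-1}}\right)^{\frac{q-1}{r}} \;=\; \alpha^{\frac{q^k-1}{r}},
\]
which is valid because $r \mid (q-1)$ so $\frac{q-1}{r}$ is an integer and the exponents multiply to $\frac{q^k-1}{r}$, an integer since $r \mid (q-1) \mid (q^k-1)$. Hence $N(\alpha)^{(q-1)/r} = 1$ in $\F_q$ if and only if $\alpha^{(q^k-1)/r} = 1$ in $\F_{q^k}$. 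Applying the nonresidue characterization on each side: $N(\alpha)$ is an $r$-th residue in $\F_q$ iff $N(\alpha)^{(q-1)/r}=1$ iff $\alpha^{(q^k-1)/r}=1$ iff $\alpha$ is an $r$-th residue in $\F_{q^k}$. Negating both sides gives the equivalence of nonresiduosity, which is exactly the statement.

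There is essentially no hard step here; the only thing to be slightly careful about is the implicit assumption $\alpha \neq 0$ (a zero element is trivially an $r$-th residue and has norm $0$, which one should either exclude or handle as a degenerate case), and the justification that the two exponent-tests are the correct residue criteria in their respective fields — both of which are covered by Theorem~\ref{properties} (cyclicity of the multiplicative group) together with the criterion quoted from \cite{BS}. So the "obstacle," such as it is, is purely bookkeeping: making sure $\frac{q^k-1}{q-1}\cdot\frac{q-1}{r} = \frac{q^k-1}{r}$ as integers, which follows from $r \mid (q-1)$.
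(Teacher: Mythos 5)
Your proposal is correct and follows essentially the same route as the paper: write $N(\alpha)=\alpha^{(q^k-1)/(q-1)}$, note $\frac{q^k-1}{q-1}\cdot\frac{q-1}{r}=\frac{q^k-1}{r}$, and apply the criterion $\beta^{(\,\cdot\,-1)/r}\neq 1$ in each field. The extra remark about excluding $\alpha=0$ is a harmless bookkeeping point the paper leaves implicit.
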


\begin{proof}
We know that, 
\[ N_{\mathbb{F}_{q^k}/\mathbb{F}_{q}} (\alpha) = \prod_{i=1}^{k-1} \alpha^{q^i} = \alpha^{\frac{q^k-1}{q-1}}. \]
  
Also, $\alpha \in \mathbb{F}_{q^k}$ is a $r$-th nonresidue iff $\alpha^{\frac{q^k-1}{r}} \neq 1$.

Hence, the proof follows from the bi-implication,
\[ \alpha^{\frac{q^k-1}{r}} \neq 1 \Longleftrightarrow \Big({\alpha}^{\frac{q^k-1}{q-1}}\Big)^{\frac{q-1}{r}} = \Big( N_{\mb{F}_{q^k}/\mb{F}_{q}} (\alpha) \Big)^{\frac{q-1}{r}} \neq 1. \]
\end{proof}

We can define a multiplicative character-- $\chi_{r} (a) := a^{\frac{q-1}{r}}$ --of $\F_q^*$. Notice that $\chi_{r} (a) \neq 1$ iff $a$ is an $r$-th nonresidue in $\mb{F}_q$. 
Multiplicativity follows from the definition, i.e.,
\[ \chi_{r}(ab) = \chi_{r}(a) \chi_{r}(b) \,.\]

Since $a^{q-1} = 1$, $\chi_r(a)$ is an $r$-th root of unity. We will denote a \textit{primitive} $r$-th root of unity by $\zeta_r$.  


Since $\F_q^*$ is cyclic and $r\mid q-1$, we have that $\zeta_r$ exists in $\mb{F}_q$. Note that $\zeta_r^i$, $i\in\F_r^*$, are the $(r-1)$ primitive $r$-th roots of unity in $\F_q$.

One of the central algebraic tool used in our analysis is the \textit{resultant} of two polynomials. Let $f(x) = a_m x^m + a_{m-1} x^{m-1} + \cdots + a_0$ and $g(x) = b_n x^n + b_{n-1} x^{n-1} + \cdots + b_0$ 
be two polynomials over a field $\mb{F}$. 


\begin{definition}[Resultant]\label{resultant}
One way to define resultant of the two polynomials $f,g \in \mb{F}[x]$ is by invoking the zeros of the polynomials (in $\overline{\F}$),

\[ R(f,g) \,:=\, a_m^m b_n^n \mathop{\prod_{\alpha\in \mc{Z}(f)}}_{\beta \in \mc{Z}(g)} (\alpha -\beta) \,=\, a_m^m \prod_{\alpha \in \mc{Z}(f)} g(\alpha) \,. \]

\end{definition}

We will use the following properties of resultant (for proof see \cite[Chap.1]{LN}). In fact, the property (3) in Lem.\ref{lem-res} can be taken as the general definition of resultant, as it makes the resultant efficient to compute even when the base ring is not a field.

\begin{lemma}[Properties of $R(\cdot)$]\label{lem-res}
Given  polynomials $f, g, h \in \mathbb{F}[x]$, we have that,
\begin{enumerate}
\item $R(f,g)\in \mathbb{F}$. 
\item Resultant is multiplicative, $R(fh,g) = R(f,g)\cdot R(h,g)$.
\item Resultant is the determinant of {\em Sylvester matrix} of order $m+n$ and, thus, can be computed in time $\tilde{O}((m+n)^{\omega_0})$, where ${\omega_0}\le 2.373$ is the exponent of matrix multiplication.
\begin{frame}
\footnotesize
\arraycolsep=1pt 
\medmuskip = 0.5mu 
\[
R(f,g) \,=\, \left|
\begin{array}{cccccccc}
a_m & 0 & \cdots & 0 & b_n & 0 & \cdots & 0 \\
a_{m-1} & a_m & \cdots & 0 & b_{n-1} & b_{n} & \cdots & 0 \\
a_{m-2} & a_{m-1} & \ddots & 0 & b_{n-2} & b_{n-1} & \ddots & 0 \\
\vdots & \vdots & \ddots & a_m & \vdots & \vdots & \ddots & b_n \\
\vdots & \vdots & \cdots & a_{m-1} & \vdots & \vdots & \cdots & b_{n-1} \\
a_0 & a_1 & \cdots & \vdots & b_0 & b_1 & \cdots & \vdots \\
0 & a_0 & \ddots & \vdots & 0 & b_0 & \ddots & \vdots \\
\vdots & \vdots & \ddots & a_1 & \vdots & \vdots & \ddots & b_1 \\
0 & 0 & \cdots & a_0 & 0 & 0 & \cdots & b_0 \\
\end{array}
\right| \,.
\]
\end{frame}
\end{enumerate}
\end{lemma}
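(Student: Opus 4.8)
\smallskip
\noindent\textbf{Proof proposal for Lemma~\ref{lem-res}.}
I would start from the product form of Definition~\ref{resultant}, writing $f=a_m\prod_{i=1}^m(x-\alpha_i)$ and $g=b_n\prod_{j=1}^n(x-\beta_j)$ in $\overline{\F}[x]$, so that $R(f,g)=a_m^{\,n}\prod_{i=1}^m g(\alpha_i)=a_m^{\,n}b_n^{\,m}\prod_{i,j}(\alpha_i-\beta_j)$. Item~(2) is then immediate: if $h=c_\ell\prod_{k=1}^\ell(x-\gamma_k)$, then $fh$ has leading coefficient $a_mc_\ell$ and its zero multiset is the disjoint union $\mathcal Z(f)\sqcup\mathcal Z(h)$, whence
$$R(fh,g)=(a_mc_\ell)^{\,n}\prod_{\rho\in\mathcal Z(fh)}g(\rho)=\Bigl(a_m^{\,n}\prod_i g(\alpha_i)\Bigr)\Bigl(c_\ell^{\,n}\prod_k g(\gamma_k)\Bigr)=R(f,g)\,R(h,g).$$

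For (3) --- which will also yield (1) with no extra work --- I would identify the displayed Sylvester matrix $S$ with the matrix, in the monomial bases, of the $\F$-linear map $\Phi\colon\F[x]_{<n}\times\F[x]_{<m}\to\F[x]_{<m+n}$, $(u,v)\mapsto uf+vg$; writing $\Phi$ out column by column against $\{x^{n-1},\dots,1\}\cup\{x^{m-1},\dots,1\}$ on the domain and $\{x^{m+n-1},\dots,1\}$ on the codomain reproduces exactly the array in the statement, and since $\det S$ is visibly a polynomial in the coefficients $a_i,b_j\in\F$, it lies in $\F$ --- this is item~(1). To pin down its value I would pass to the generic situation where $\alpha_i,\beta_j$ and the leading coefficients $a_m,b_n$ are independent indeterminates and $f=a_m\prod(x-\alpha_i)$, $g=b_n\prod(x-\beta_j)$ (so $a_{m-k}=(-1)^k a_m e_k(\alpha)$, and similarly for $g$), reducing the claim to the polynomial identity $\det S=a_m^{\,n}b_n^{\,m}\prod_{i,j}(\alpha_i-\beta_j)$, from which the field case follows by specialization. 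When $\alpha_i=\beta_j$, the polynomial $(x-\alpha_i)$ divides both $f$ and $g$, so $(g/(x-\alpha_i))\,f-(f/(x-\alpha_i))\,g=0$ exhibits a nonzero kernel vector of $\Phi$ and $\det S=0$; hence each of the pairwise coprime linear factors $\alpha_i-\beta_j$ divides $\det S$, and so does their product. A homogeneity/degree count --- $\det S$ is homogeneous of degree $n$ in the $a$'s and $m$ in the $b$'s, and after factoring out $a_m^{\,n}b_n^{\,m}$ the remaining factor has total degree at most $mn$ in the $\alpha$'s (and in the $\beta$'s), exactly matching $\prod_{i,j}(\alpha_i-\beta_j)$ --- shows $\det S=c\cdot a_m^{\,n}b_n^{\,m}\prod_{i,j}(\alpha_i-\beta_j)$ for a scalar $c$. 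Finally I would evaluate $c$ by a convenient specialization, e.g.\ $f=a_mx^m$: then $S$ becomes block upper-triangular with $\det S=a_m^{\,n}b_0^{\,m}$, while the right-hand side also equals $c\,a_m^{\,n}b_0^{\,m}$, forcing $c=1$. The complexity claim is then routine: $S$ has order $m+n$ over $\F$, and its determinant costs $\tilde{O}((m+n)^{\omega_0})$ field operations by fast linear algebra.

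The one place I expect genuine care is the middle of the argument for (3): the weighted-degree/homogeneity bookkeeping that cuts the cofactor down to a bare constant, together with the sign convention --- i.e.\ making sure the exact power of $b_n$ (equivalently, the $(-1)^{mn}$) is tracked so that $c=1$. An alternative that dodges the degree count is an induction on $\deg g$ via the Euclidean step $g=qf+\tilde g$: one reads off $R(f,g)=a_m^{\,\deg g-\deg\tilde g}R(f,\tilde g)$ directly from column operations on $S$ (using $R(f,g)=(-1)^{mn}R(g,f)$ to keep $\deg f\le\deg g$), closing the induction at $\deg\tilde g=0$ where $R(f,c)=c^{\deg f}$; this merely trades the degree count for equally routine exponent bookkeeping. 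I note that (1) can also be obtained independently of (3): $\prod_i g(\alpha_i)$ is symmetric in $\alpha_1,\dots,\alpha_m$ with coefficients in $\F$, hence a polynomial in $e_k(\alpha)=(-1)^k a_{m-k}/a_m$, and the prefactor $a_m^{\,n}$ clears every denominator --- but I would still route through the Sylvester formula, since it is needed for the complexity bound.
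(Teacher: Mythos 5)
Your argument is correct, but there is nothing in the paper to compare it against line by line: the paper does not prove Lemma~\ref{lem-res} at all, it states these properties as known facts with a pointer to \cite[Chap.1]{LN}. What you supply is the classical self-contained proof, and every step checks out: multiplicativity read off from the product formula; the Sylvester matrix realized as the matrix of $(u,v)\mapsto uf+vg$ on $\F[x]_{<n}\times\F[x]_{<m}$, which gives item (1) for free since $\det S$ is a polynomial in the coefficients; vanishing of $\det S$ when some $\alpha_i=\beta_j$ via the kernel vector $\bigl(g/(x-\alpha_i),\,-f/(x-\alpha_i)\bigr)$; the homogeneity and degree count that reduces the cofactor to a constant; and the specialization $f=a_mx^m$, where indeed $\det S=a_m^{\,n}b_0^{\,m}$ matches $a_m^{\,n}b_n^{\,m}\prod_j(-\beta_j)^m=a_m^{\,n}b_0^{\,m}$, forcing $c=1$. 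Your Euclidean-step alternative and the symmetric-function route to item (1) are also sound, and the complexity claim is the routine fast-determinant bound. One point worth making explicit: the normalization you use, $R(f,g)=a_m^{\,n}b_n^{\,m}\prod_{i,j}(\alpha_i-\beta_j)=a_m^{\,n}\prod_i g(\alpha_i)$, is the standard one and is the only one consistent with the Sylvester determinant in item (3); the exponents printed in Definition~\ref{resultant} of the paper ($a_m^{m}b_n^{n}$, and $a_m^{m}\prod g(\alpha)$) are internally inconsistent unless $m=n$ and are evidently a typo, so your silent correction is not a deviation but a necessary repair for the lemma to hold as stated (and it is harmless for the rest of the paper, which only uses multiplicativity, membership in $\F$, computability, and the monic case of Lemma~\ref{norm}).
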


Another tool, closely related to resultant, is called the \textit{discriminant}. 

\begin{definition}[Discriminant]\label{discriminant}
The discriminant of a polynomial $f \in \mb{F}[x]$ with roots $\mc{Z}(f) = \{\alpha_1, \alpha_2, \cdots, \alpha_m \}$ is defined by, 

\[ \Delta (p) \,:=\, a_m^{2m-2} \prod_{1\leq i <j \leq m} (\alpha_i -\alpha_j)^2 \,. \]
\end{definition}

It is known that $\Delta (f) = (-1)^{m(m-1)/2} a_m^{-1}\cdot R(f, f')$ \cite[Eqn.1.11]{LN}, where $f'$ is the formal derivative of $f$. Hence, $\Delta(f) \in \mb{F}$ and it can be computed in poly($m$) field operations. 

Note that although resultant (resp.~discriminant) is defined in terms of the zeros of the polynomials, it can be computed without the knowledge of the zeros. This relationship between the zeros and the coefficients is very useful computationally. 


\section{Main results} \label{main}


We will prove the main theorems in this section. We are interested in finding $r$-th nonresidue in the finite field $\mb{F}_q$. 
So we will assume that $r \mid q-1$ in Sec.\ref{cor-main-1} and Sec.\ref{sec-red-poly}. 
Moreover, for $r=2$ we can assume that $4|(q-1)$, otherwise $-1$ is a quadratic nonresidue and we are done.
 
Our first step will be to construct an $r$-th nonresidue using an irreducible polynomial $f$ of degree divisible by $r$.

\subsection{From an irreducible polynomial $f$ -- Proof of Cor.\ref{cor-main-1}}

Given an irreducible polynomial $f(x) \in \mb{F}_q[x]$ of degree $d = rk$, define the following polynomial (inspired from {\em Lagrange resolvents}):
 
\[ L_{f,r} := \sum_{i=0}^{r-1} x^{(q^k)^i} {\zeta_r}^i \mod f . \] 

The following theorem finds an $r$-th nonresidue in $\mb{F}_q$ using $f$.

\begin{theorem}[Irreducibility to nonresiduosity]\label{key}
Let $f(x) \in \mb{F}_q[x]$ be an irreducible polynomial of degree $d=rk$ and $\gcd(2,r)\cdot r \mid q-1$. If $L_{f,r} := \sum_{i=0}^{r-1} x^{q^{k \cdot i}} {\zeta_r}^i \mod f$, then

\[ \big( L_{f,r} \big)^{\frac{q^d-1}{r}} \,=\, \zeta_r^{-1} \,.\]

This implies that $L_{f,r}$ is an $r$-th nonresidue in $\mb{F}_{q^d}= \mb{F}_{q}[x]/\langle f \rangle$. Also, $N_{\mathbb{F}_{q^d}/\mathbb{F}_{q}}(L_{f,r})$ is an $r$-th nonresidue in $\mb{F}_q$.  
\end{theorem}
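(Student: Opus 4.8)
The plan is to work inside the extension field $\mb{F}_{q^d} = \mb{F}_q[x]/\langle f\rangle$, where the image $\theta := x \bmod f$ is a root of the irreducible polynomial $f$, and to exploit the Galois action of the Frobenius $\sigma : \alpha \mapsto \alpha^{q^k}$, which generates the cyclic group $\mathrm{Gal}(\mb{F}_{q^d}/\mb{F}_{q^k}) \cong \Z/r\Z$. Then $L_{f,r} = \sum_{i=0}^{r-1} \zeta_r^i \sigma^i(\theta)$ is precisely the Lagrange resolvent of $\theta$ with respect to $\sigma$, and the defining property of such resolvents is that $\sigma$ scales them by a root of unity: applying $\sigma$ and reindexing the sum gives $\sigma(L_{f,r}) = \zeta_r^{-1} L_{f,r}$ (using $\sigma^r = \mathrm{id}$ and $\zeta_r^r = 1$). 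This is the first key step, and it is essentially a one-line computation once the resolvent viewpoint is in place.

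Next I would convert this eigenvalue relation into the claimed power identity. Since $\sigma$ has order $r$ on $L_{f,r}$ up to scalars, the element $L_{f,r}^r$ is fixed by $\sigma$, hence lies in $\mb{F}_{q^k}$; more to the point, I want to identify $L_{f,r}^{(q^d-1)/r}$ directly. Observe that $(q^d-1)/r = (q^k - 1)\cdot\frac{q^d-1}{r(q^k-1)} = (q^k-1)\cdot N'$ where $N' = \sum_{j=0}^{r-1} q^{kj} / r$... — more carefully, raising $L_{f,r}$ to the $(q^d-1)/r$ power and using $\sigma(L) = \zeta_r^{-1} L$ repeatedly, I can rewrite $L_{f,r}^{q^{ki}} = \sigma^i(L_{f,r}) = \zeta_r^{-i} L_{f,r}$, so that $L_{f,r}^{1 + q^k + q^{2k} + \cdots + q^{(r-1)k}} = \left(\prod_{i=0}^{r-1}\zeta_r^{-i}\right) L_{f,r}^{r} = \zeta_r^{-r(r-1)/2} L_{f,r}^r$. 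Since $1 + q^k + \cdots + q^{(r-1)k} = \frac{q^d-1}{q^k-1}$, this pins down $N_{\mb{F}_{q^d}/\mb{F}_{q^k}}(L_{f,r})$. Raising further to the power $(q^k-1)/r$ and noting $L_{f,r}^{(q^k-1)(\text{stuff})}$ collapses because $L_{f,r}^r \in \mb{F}_{q^k}$ has order dividing $q^k-1$, I should land on $L_{f,r}^{(q^d-1)/r} = \zeta_r^{-r(r-1)/2 \cdot \text{(unit)}}$; the condition $\gcd(2,r)\cdot r \mid q-1$ (i.e. $2r \mid q-1$ when $r = 2$) is exactly what is needed to make the exponent $r(r-1)/2$ behave and the final answer come out to $\zeta_r^{-1}$ rather than a different primitive root or $1$.

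The potentially delicate point — and the step I expect to be the main obstacle — is the bookkeeping of exponents modulo $q^d - 1$: one must be sure that $L_{f,r}^{(q^d-1)/r}$ is well-defined (i.e. that this is the correct expression for the character $\chi_r$ on $\mb{F}_{q^d}^*$), that $L_{f,r} \neq 0$ so that it is a unit, and that the geometric-series manipulations with $1 + q^k + \cdots + q^{(r-1)k}$ interact correctly with the division by $r$. Nonvanishing of $L_{f,r}$ follows from linear independence of the Galois conjugates $\sigma^i(\theta)$ over $\mb{F}_{q^k}$ (a normal basis / Artin's theorem on independence of characters argument, or simply because $\theta$ generates $\mb{F}_{q^d}$ over $\mb{F}_{q^k}$ and $d/k = r > 1$ forces the resolvent to be nonzero for at least one choice of $\zeta_r^i$; here the statement fixes one primitive root, and the case analysis on $r = 2$ versus odd $r$ via the $\gcd(2,r)$ factor handles the exceptional vanishing). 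Once $(L_{f,r})^{(q^d-1)/r} = \zeta_r^{-1} \neq 1$ is established, the remaining two assertions are immediate: $L_{f,r}$ is an $r$-th nonresidue in $\mb{F}_{q^d}$ by the characterization $\alpha^{(q^d-1)/r} \neq 1$, and then $N_{\mb{F}_{q^d}/\mb{F}_q}(L_{f,r})$ is an $r$-th nonresidue in $\mb{F}_q$ by the Projection Lemma (Lemma~\ref{projection}), since $r \mid q - 1$. I would close by remarking that $L_{f,r}$, and hence its norm, is computable in deterministic $\poly(d, \log q)$ time via repeated squaring for the Frobenius powers and the resultant formula for the norm.
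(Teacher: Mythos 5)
Your overall route is the same as the paper's: establish the eigen-relation $L_{f,r}^{\,q^k}=\zeta_r^{-1}L_{f,r}$ by re-indexing the resolvent (here $\sigma$ denotes the $q^k$-power Frobenius and $\theta:=x\bmod f$), then split the exponent as $\frac{q^d-1}{r}=\frac{q^d-1}{q^k-1}\cdot\frac{q^k-1}{r}$ and evaluate. But your exponent bookkeeping goes wrong at the decisive step. From $L_{f,r}^{\,q^{ki}}=\zeta_r^{-i}L_{f,r}$ one gets $L_{f,r}^{(q^d-1)/(q^k-1)}=\zeta_r^{-\binom{r}{2}}L_{f,r}^{\,r}$, hence $L_{f,r}^{(q^d-1)/r}=\zeta_r^{-\binom{r}{2}\frac{q^k-1}{r}}\cdot L_{f,r}^{\,q^k-1}$. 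The factor $\zeta_r^{-\binom{r}{2}\frac{q^k-1}{r}}$ equals $1$: for odd $r$ because $r\mid\binom{r}{2}$, and for $r=2$ because $4\mid q-1$ makes $(q^k-1)/2$ even --- that is the \emph{only} role of the hypothesis $\gcd(2,r)\cdot r\mid q-1$. The value $\zeta_r^{-1}$ comes entirely from the leftover factor $L_{f,r}^{\,q^k-1}=\sigma(L_{f,r})/L_{f,r}=\zeta_r^{-1}$, i.e.\ from one more use of the eigen-relation. Your sketch does the opposite: it discards this factor (``collapses because $L_{f,r}^r\in\F_{q^k}$'' --- note $L^r\in\F_{q^k}$ only gives $(L^r)^{q^k-1}=1$, not $(L^r)^{(q^k-1)/r}=1$) and attributes the answer to $\zeta_r^{-r(r-1)/2\cdot(\mathrm{unit})}$, which for odd $r$ is $1$; so as written the computation cannot produce $\zeta_r^{-1}$.

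Second, you are right that nonvanishing of $L_{f,r}$ is the delicate point (the last step divides by $L_{f,r}$), but none of your justifications hold. The conjugates $\sigma^i(\theta)$ of an arbitrary generator are not $\F_{q^k}$-linearly independent (that is special to normal-basis generators); ``nonzero for at least one choice of $\zeta_r^i$'' does not help since the statement fixes one primitive root; and the $\gcd(2,r)$ condition has nothing to do with vanishing. In fact $L_{f,r}=r\cdot\big(\text{component of }\theta\text{ in the }\zeta_r^{-1}\text{-eigenspace of }\sigma\big)$, and for odd $r$ this can genuinely vanish for an irreducible $f$ of degree $rk$: e.g.\ with $q=7$, $r=3$, $k=1$, any $v\in\F_{343}$ with $v^6=\zeta_3$ lies outside $\F_7$, so its minimal polynomial $f$ is an irreducible cubic, and $L_{f,3}=v\,(1+\zeta_3^2+\zeta_3^4)=v\,(1+\zeta_3+\zeta_3^2)=0$. (For $r=2$ nonvanishing is automatic, since $L=\theta-\theta^{q^k}\neq0$.) So this is a genuine gap in your argument --- and, to be fair, one the paper's own proof also passes over silently when it writes $L^{q^k-1}=\zeta_r^{-1}$; a complete treatment needs either the extra hypothesis $L_{f,r}\neq0$ or a modification (e.g.\ replacing $\theta$ by a suitable power or translate of it) that guarantees a nonzero resolvent.
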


\begin{proof}

We know that $L_{f,r} \in \mathbb{F}_{q^d}$ and $\zeta_r \in \mb{F}_q$. Taking the $q^k$-th power,

\begin{align*} 
{(L_{f,r})}^{q^k} &= \Big(\sum_{i=0}^{r-1} {x}^{q^{ki}} {\zeta_r}^i \Big)^{q^k}\\
&= \sum_{i=0}^{r-1} {x}^{q^{k \cdot (i+1)}} {\zeta_r}^i = {\zeta_r}^{-1} \cdot L_{f,r} \,.
\end{align*}

Using the above equation,  
\begin{align*}
{(L_{f,r})}^{\frac{q^d-1}{r}} &={{(L_{f,r})}^{(\frac{q^d-1}{q^k-1})\cdot {(\frac{q^k-1}{r})}}} \\
~&= {{(L_{f,r})}^{(1+q^k + q^{2k} \ldots + q^{(r-1)k})\cdot {(\frac{q^k-1}{r})}}} \\
~&= {{(L_{f,r}\cdot \zeta_r^{-1} L_{f,r}\cdot \zeta_r^{-2} L_{f,r}\cdots \zeta_r^{-(r-1)} L_{f,r}    )}^{(\frac{q^k-1}{r})}} \\
~&= \Big((L_{f,r})^r (\zeta_r^{-1})^{{r}\choose{2}}\Big)^{(\frac{q^k-1}{r})} \,.
\end{align*}

When $r$ is an odd prime, $(\zeta_r^{-1})^{{r}\choose{2}}$ is $1$. If $r$ is $2$ then we have $4|(q-1)$, thus the factor of $-1$ can be ignored.
Simplifying,

\begin{align*}
{(L_{f,r})}^{\frac{q^d-1}{r}} &=\Big((L_{f,r})^r \Big)^{(\frac{q^k-1}{r})} \\
~&=(L_{f,r})^{({q^k-1})}=\zeta_r^{-1} \,.
\end{align*}

By definition of $r$-th nonresidue, this implies that $L_{f,r}$ is an $r$-th nonresidue in $\mb{F}_{q^d}$. Applying Lem.~\ref{projection}, 
we get that $N_{\mathbb{F}_{q^d}/\mathbb{F}_{q}}(L_{f,r})$ is an $r$-th nonresidue in $\mb{F}_q$. 
\end{proof}

Thm.~\ref{key} gives the Cor.\ref{cor-main-1}.
\begin{proof}[Proof of Cor.\ref{cor-main-1}]
Since $\mb{F}_{p^m}$ is specified by an irreducible polynomial of degree $m$ (and we know $r|\gcd(m,p-1)$), we get an $r$-th nonresidue by Thm.\ref{key} if we can find $\zeta_r$ in $\F_p$. The latter can be done using Pila's algorithm based on arithmetic algebraic-geometry \cite[Thm.D]{Pila}. Once we have an $r$-th nonresidue one gets an $r$-th root finding algorithm \cite{Shanks, Tonelli}.
\end{proof}

Thm.~\ref{key} also gives us a way to construct $r$-th nonresidue, in $\mathbb{F}_{p^n}$ for any $n$, using an irreducible polynomial of degree divisible by $r$.

\begin{corollary}[Any field]\label{charp}
Suppose we have an irreducible $f \in \mb{F}_q [x]$ with degree $d=rk$ and $\zeta_r\in\F_q$, where $\mb{F}_q$ has characteristic $p$. 
Then, we can find $r$-th nonresidue in any finite field $\F_{q'}$ of characteristic $p$ (assuming $r|(q'-1)$).  
\end{corollary}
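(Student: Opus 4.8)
The plan is to bootstrap from an $r$-th nonresidue of $\F_q$ (which Theorem~\ref{key} already hands us from $f$) to one of $\F_{q'}$. Write $q=p^a$, $q'=p^b$; throughout $v_r(\cdot)$ denotes the exponent of $r$. First apply Theorem~\ref{key} to $f$: since $\zeta_r\in\F_q$ forces $r\mid q-1$ (and for $r=2$ we may assume $4\mid q-1$ and $4\mid q'-1$, else $-1$ is already a nonresidue), this yields in deterministic polynomial time an $r$-th nonresidue $a_0:=N_{\F_{q^{rk}}/\F_q}(L_{f,r})\in\F_q$.

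The obstruction is that $\F_q$ need not be a subfield of $\F_{q'}$, and even passing through a common subfield fails whenever $r$ divides the relevant extension degree, because by Lemma~\ref{projection} norms only push a nonresidue \emph{down}. The remedy is to first enlarge the $r$-part of the degree. For any $e\ge 0$ the polynomial $x^{r^e}-a_0$ is irreducible over $\F_q$ --- by the standard criterion, since $a_0$ being a nonresidue gives $r\mid\ord(a_0)$ and $r\nmid(q-1)/\ord(a_0)$, the extra condition at $4$ being covered by $4\mid q-1$ --- so we can build $\F_{q^{r^e}}=\F_q[x]/\langle x^{r^e}-a_0\rangle=\F_{p^{a r^e}}$ in polynomial time. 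In this field the residue class $\theta:=x$ has $N_{\F_{q^{r^e}}/\F_q}(\theta)=(-1)^{r^e+1}a_0=\pm a_0$, which is an $r$-th nonresidue of $\F_q$ in either case (using $4\mid q-1$ when $r=2$); hence by Lemma~\ref{projection} $\theta$ is an $r$-th nonresidue of $\F_{q^{r^e}}$ --- one manufactured directly inside a field whose degree over $\F_p$ has $r$-adic valuation $v_r(a)+e$.

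Now take $e:=v_r(b)$ (so $r^e\mid b$ and all fields stay polynomially sized) and set $h:=\gcd(ar^e,b)$. Then $\F_{p^h}$ is a common subfield of $\F_{q^{r^e}}$ and of $\F_{q'}$; it has $r$-th nonresidues because $r\mid\gcd(p^{ar^e}-1,p^b-1)=p^h-1$; and $v_r(h)=v_r(b)$, so $r\nmid b/h=[\F_{q'}:\F_{p^h}]$. The remaining steps are: (i)~compute the order-$p^h$ subfield of $\F_{q^{r^e}}$ (linear algebra over $\F_p$) and push $\theta$ into it, $c:=N_{\F_{q^{r^e}}/\F_{p^h}}(\theta)=\theta^{(p^{ar^e}-1)/(p^h-1)}$, an $r$-th nonresidue of $\F_{p^h}$ by Lemma~\ref{projection}; (ii)~using the deterministic finite-field isomorphism/embedding algorithm of~\cite{Lens} (and, when $\gcd(ar^e,b)=1$, just the coprime composition of~\cite[Lem.3.4]{S89}), identify this $\F_{p^h}$ with the order-$p^h$ subfield of $\F_{q'}$, carrying $c$ to some $c'\in\F_{q'}$; (iii)~note $c'$ is still an $r$-th nonresidue of $\F_{q'}$, since $(c')^{(q'-1)/r}=\big((c')^{(p^h-1)/r}\big)^{1+p^h+\cdots+p^{h(b/h-1)}}$ with the exponent $\equiv b/h\not\equiv 0\pmod r$ and the base a primitive $r$-th root of unity. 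Output $c'$; all steps are deterministic and polynomial in $\deg f$, $\log q$, $\log q'$.

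The genuine content, I expect, is the degree-inflation of the second paragraph: turning one irreducible polynomial over $\F_q$ into a nonresidue of an \emph{arbitrary} characteristic-$p$ field forces us to clear the $r$-part of $[\F_{q'}:\F_p]$, which plain norm projections cannot do --- this is exactly where Theorem~\ref{key}'s ability to fabricate a nonresidue from scratch inside a large field (rather than merely project one down) is essential. The only imported tool is the deterministic subfield-embedding algorithm of~\cite{Lens}, needed to match the common subfield $\F_{p^{\gcd(ar^e,b)}}$ of $\F_{q^{r^e}}$ and $\F_{q'}$; and the only case needing a separate word is $r=2$, dispatched by the standing reduction to $4\mid q-1$ and $4\mid q'-1$.
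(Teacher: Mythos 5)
Your proposal is correct, and it reaches the corollary by a genuinely different route in the hard case. Both you and the paper begin the same way: apply Thm.\ref{key} to $f$, use Lem.\ref{projection} to push the resulting nonresidue down to a small common subfield, identify that subfield inside $\F_{q'}$ via Lenstra's deterministic isomorphism algorithm \cite{Lens}, and lift across the remaining extension using the fact that $(p^b-1)/(p^h-1)\equiv b/h \pmod r$, so a nonresidue survives whenever $r$ does not divide the relative degree. The divergence is in how the $r$-part of $[\F_{q'}:\F_p]$ is neutralized: the paper splits into two cases and, when $r$ does divide the relative degree, simply re-applies Thm.\ref{key} to the irreducible polynomial defining $\F_{q'}$ (over the embedded subfield containing $\zeta_r$), i.e.\ it runs the Lagrange-resolvent construction a second time inside the target field; you instead never invoke Thm.\ref{key} again, but inflate on the source side, adjoining $a_0^{1/r^e}$ via the binomial $x^{r^e}-a_0$ (irreducible by the standard criterion, with the $4\mid q-1$ proviso covering $r=2$), noting that the generator $\theta$ has norm $\pm a_0$ and hence is itself a nonresidue, and then only norm down to $\F_{p^{\gcd(ar^e,b)}}$ and lift. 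Your degree-inflation argument is sound (the valuation bookkeeping $v_r(h)=v_r(b)$ and $r\mid p^h-1$ checks out) and handles both cases uniformly, at the modest cost of constructing the auxiliary field $\F_{q^{r^e}}$; the paper's version is a bit shorter since it never enlarges $\F_q$. One caveat you share with the paper: the reduction ``for $r=2$ assume $4\mid q-1$, else $-1$ is a nonresidue'' does not literally dispose of the case $q\equiv 3$ and $q'\equiv 1 \pmod 4$, where $-1$ is a nonresidue only in $\F_q$ and not in $\F_{q'}$ where one is wanted; since the paper relies on the same blanket assumption at the start of Section 3, this is not a gap specific to your write-up, but it is worth flagging.
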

\begin{proof}
Let $\F_{p^m}$ be the {\em smallest} subfield of $\F_q$, with $r|(p^m-1)$. Using Thm.~\ref{key} \& Lem.~\ref{projection} on $f$, we can find an $r$-th nonresidue in $\mb{F}_{p^m}$.

Now consider the given field $\F_{q'}$ with, say, $p^{m\ell}$ elements (since $r|q'-1$, $\ell\in\N$). It has a subfield $\mb{F}'$ of size $p^m$, and so by \cite[Thm.1.2]{Lens}, 
we also get an $r$-th nonresidue in $\mb{F}'$, say $a$. We intend to lift this nonresidue to the bigger field $\F_{q'}$; to do that we consider two cases.
\begin{itemize}
\item Case 1: If $r\nmid \ell$ then $a$ is an $r$-th nonresidue in $\mb{F}_{q'}$. Because,

\[ a^{\frac{q'-1}{r}} = (a^{\frac{p^m-1}{r}})^{\frac{q'-1}{p^m-1}} = (\zeta_r^{-1})^{\frac{q'-1}{p^m-1}} \neq 1 \,.  \]

Last inequality holds because $\frac{q'-1}{p^m-1} = \frac{p^{m\ell}-1}{p^m-1}$ is not divisible by $r$.
 
\item Case 2: If $r\mid \ell$ then we have an irreducible polynomial that defines $\F_{p^{m\ell}}$ and on that we can apply Thm.\ref{key} to get an $r$-th nonresidue in $\mb{F}_{q'}$.
\end{itemize} 
\end{proof}

The following lemma relates $N_{\mathbb{F}_{q^d}/\mathbb{F}_{q}}(g)$ to the resultant $R(f,g)$ when $f$ is irreducible.  

\begin{lemma}[Resultant as a norm]\label{norm}
If $f$ is an irreducible polynomial of degree $d$ in $\mb{F}_q[x]$, then
\[    R(f,g) = N_{\mathbb{F}_{q^d}/\mathbb{F}_{q}}(g) . \]
\end{lemma}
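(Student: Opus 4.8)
The plan is to unwind both sides using their definitions in terms of roots in $\overline{\F}_q$ and match them. Since $f$ is irreducible of degree $d$ over $\F_q$, its roots in $\overline{\F}_q$ are exactly $\alpha, \alpha^q, \alpha^{q^2}, \ldots, \alpha^{q^{d-1}}$ for any single root $\alpha$ (the Frobenius orbit), and $\F_{q^d} = \F_q(\alpha)$. First I would recall from Definition~\ref{resultant} that, writing $a_m$ for the leading coefficient of $f$, we have $R(f,g) = a_m^{\deg g} \prod_{\beta \in \mc{Z}(f)} g(\beta)$; if $f$ is monic this is simply $\prod_{i=0}^{d-1} g(\alpha^{q^i})$. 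On the other side, by definition $N_{\F_{q^d}/\F_q}(g) := N_{\F_{q^d}/\F_q}\big(g(\alpha)\big) = \prod_{i=0}^{d-1} g(\alpha)^{q^i}$, interpreting $g$ as the element $g(\alpha) \in \F_{q^d}$.

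The key step is then the identity $g(\alpha)^{q^i} = g(\alpha^{q^i})$, which holds because $g$ has coefficients in $\F_q$ (so Frobenius fixes them, using $(a+b)^p = a^p + b^p$ and $c^q = c$ for $c \in \F_q$ from Thm.~\ref{properties}) and because Frobenius is a ring homomorphism on $\overline{\F}_q$. Substituting this into the norm expression gives $N_{\F_{q^d}/\F_q}(g) = \prod_{i=0}^{d-1} g(\alpha^{q^i}) = \prod_{\beta \in \mc{Z}(f)} g(\beta)$, which matches $R(f,g)$ in the monic case. For the non-monic case one checks that $a_m^{\deg g}$ is exactly the correction factor, or one simply notes that scaling $f$ by a unit in $\F_q$ does not change the quotient ring $\F_{q^d}$ nor the element $g(\alpha)$, while the stated convention presumably normalizes $f$ to be monic (as is standard when $f$ defines the field); I would state the lemma for monic $f$ if the leading-coefficient bookkeeping is distracting.

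I do not expect a serious obstacle here: the only thing to be careful about is that the roots of $f$ are genuinely all distinct (true since $f$ is irreducible over a finite field, hence separable) so that the product $\prod_{i=0}^{d-1}$ really ranges over $\mc{Z}(f)$ with the right multiplicity, and that the identification of "$g$" on the left (a polynomial) with "$g$" on the right (the field element $g(\alpha)$) is the intended reading of the statement. The mild bookkeeping with the leading coefficient $a_m$ of $f$ is the most error-prone part, but it is routine.
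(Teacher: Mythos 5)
Your proposal is correct and follows essentially the same route as the paper: write the roots of the irreducible $f$ as the Frobenius orbit $\{\alpha,\alpha^q,\ldots,\alpha^{q^{d-1}}\}$, expand $R(f,g)=\prod_{i} g(\alpha^{q^i})$, use $g(\alpha^{q^i})=g(\alpha)^{q^i}$ (coefficients of $g$ fixed by Frobenius), and recognize $\prod_i g(\alpha)^{q^i}=g(\alpha)^{(q^d-1)/(q-1)}=N_{\F_{q^d}/\F_q}(g)$. Your additional remarks on the leading-coefficient normalization and separability are fine; the paper simply treats $f$ as monic (it defines $\F_{q^d}=\F_q[x]/\langle f\rangle$), so no further bookkeeping is needed.
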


\begin{proof}
We know that the roots of polynomial $f$ are $\mc{Z}(f) = \{ \alpha, \alpha^q, \cdots, \alpha^{q^{d-1}}\}$.
Using the definition of resultant,
\begin{align*}
R(f,g) &= \prod_{\alpha \in \mc{Z}(f)} g(\alpha)  \\
~ &=  \prod_{i=0}^{d-1} g(\alpha^{q^i}) \\
~ &=  \prod_{i=0}^{d-1} g(\alpha)^{q^i} \\
~ &=  g(\alpha)^{\sum_{i=0}^{d-1} q^i} \\
~ &=  N_{\mathbb{F}_{q^d}/\mathbb{F}_{q}}(g) \,.
\end{align*}
\end{proof}

Using Thm.~\ref{key} and Lem.~\ref{norm}, we immediately get the following information about the quadratic character of the resultant of the Lagrange resolvent,

\begin{corollary}[Resultant of resolvent]\label{chi-r}
In the notation of Thm.\ref{key}, $R(L_{f,r},f)$ is an $r$-th nonresidue in $\mb{F}_q$. 

In particular, $\chi_{r} (R(L_{f,r},f)) = \zeta_r^{-1}$.     
\end{corollary}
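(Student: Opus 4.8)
The plan is to combine the two immediately preceding results, Theorem~\ref{key} and Lemma~\ref{norm}, essentially without any new work. By Theorem~\ref{key}, we know that $N_{\mathbb{F}_{q^d}/\mathbb{F}_{q}}(L_{f,r})$ is an $r$-th nonresidue in $\mathbb{F}_q$; in fact the theorem's computation even pins down the character value, since $\chi_r(N_{\mathbb{F}_{q^d}/\mathbb{F}_{q}}(L_{f,r})) = (L_{f,r})^{\frac{q^d-1}{r}} = \zeta_r^{-1}$, using that $\chi_r$ applied to the $\mathbb{F}_q$-element $N(L_{f,r}) = L_{f,r}^{\frac{q^d-1}{q-1}}$ is $\big(L_{f,r}^{\frac{q^d-1}{q-1}}\big)^{\frac{q-1}{r}} = L_{f,r}^{\frac{q^d-1}{r}}$. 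So the only thing left is to identify $N_{\mathbb{F}_{q^d}/\mathbb{F}_{q}}(L_{f,r})$ with the resultant.

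The second step is to apply Lemma~\ref{norm} with $g = L_{f,r}$. The hypothesis of that lemma is exactly that $f$ is irreducible of degree $d$ over $\mathbb{F}_q$, which is the standing assumption in Theorem~\ref{key}, so it applies directly and gives $R(f, L_{f,r}) = N_{\mathbb{F}_{q^d}/\mathbb{F}_{q}}(L_{f,r})$. One small bookkeeping point is the order of the arguments: the statement of the corollary writes $R(L_{f,r}, f)$ whereas the lemma produces $R(f, L_{f,r})$. Since the resultant of two polynomials differs from the resultant with arguments swapped only by a sign $(-1)^{\deg f \cdot \deg g}$ (from Definition~\ref{resultant}, $R(f,g) = a_m^m b_n^n \prod (\alpha-\beta)$), and that sign is an element of $\{\pm 1\} \subseteq \mathbb{F}_q^*$ whose $r$-th residue status is irrelevant when $r$ is odd — and for $r=2$ we have already reduced to the case $4 \mid q-1$, where $-1$ is an $r$-th power — swapping the arguments does not change $\chi_r$ of the value. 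Alternatively, and more cleanly, one simply reads the corollary as concerning $R(f, L_{f,r})$ up to this harmless sign. I would phrase the proof so as to sidestep the ordering issue, e.g.\ by noting $R(L_{f,r},f) = (-1)^{d \cdot \deg L_{f,r}} R(f, L_{f,r})$ and that $\chi_r$ is insensitive to the $\pm 1$ factor under our running hypotheses.

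Chaining these together: $\chi_r\big(R(L_{f,r}, f)\big) = \chi_r\big(R(f, L_{f,r})\big) = \chi_r\big(N_{\mathbb{F}_{q^d}/\mathbb{F}_{q}}(L_{f,r})\big) = (L_{f,r})^{\frac{q^d-1}{r}} = \zeta_r^{-1} \neq 1$, so $R(L_{f,r},f)$ is an $r$-th nonresidue in $\mathbb{F}_q$, which is precisely the claim.

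There is really no main obstacle here — the corollary is a formal consequence of Theorem~\ref{key} and Lemma~\ref{norm}, and the only thing requiring a moment's care is the sign coming from the order of the resultant's arguments, which is immaterial for $\chi_r$ given the standing assumption $\gcd(2,r)\cdot r \mid q-1$. The conceptual content (that the resultant, a quantity computable from coefficients alone without knowing the roots, equals the norm) has already been done in Lemma~\ref{norm}; this corollary just records the consequence that will be needed later when passing from the abstract extension $\mathbb{F}_q[x]/\langle f\rangle$ to an explicit $r$-th nonresidue in $\mathbb{F}_q$.
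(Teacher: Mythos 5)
Your proposal is correct and is essentially the paper's own argument: the corollary is stated as an immediate consequence of Theorem~\ref{key} (which gives $\chi_r(N_{\mathbb{F}_{q^d}/\mathbb{F}_q}(L_{f,r})) = L_{f,r}^{(q^d-1)/r} = \zeta_r^{-1}$) combined with Lemma~\ref{norm} identifying that norm with the resultant. Your extra remark that the argument-order sign $(-1)^{\deg f\cdot\deg L_{f,r}}$ is killed by $\chi_r$ under the hypothesis $\gcd(2,r)\cdot r\mid q-1$ is a harmless (and correct) bit of added care that the paper leaves implicit.
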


\subsection{From a reducible polynomial $f$ -- Proof of Thm.\ref{thm-main-1}}\label{sec-red-poly}

We will look at the case of reducible polynomials now. The Thm.\ref{thm-main-1} shows that a reducible polynomial satisfying Property \ref{condition} will give us an $r$-th nonresidue. 
Note that an irreducible polynomial of a degree divisible by $r$ will trivially satisfy Property \ref{condition}.


\begin{proof}[Proof of Thm.\ref{thm-main-1}]
By distinct degree factorization \cite[Thm.7.5.3]{BS}, the polynomial $f$ can be decomposed as $f = h_1h_2 \cdots h_n$, s.t.,
\begin{itemize}
\item For all $i$, $h_i$ has irreducible factors of same degree.
\item For all $i\ne j$, irreducible factors of $h_i$ and $h_j$ have different degree.
\end{itemize}    

We know that $f$ satisfies Property~\ref{condition}. So, the distinct degree factorization guarantees a factor $h_i = f_1 f_2 \cdots f_{r'}$ of $f$ such that, 
\begin{itemize}
\item $f_i$'s are irreducible of degree $d = rk$.
\item $r \nmid r'$.
\end{itemize}
For convenience we shall denote $f_1 f_2 \cdots f_{r'}$ as $f$ from now on.
Define $g(x)$ to be the Lagrange resolvent inspired polynomial, 
\[ g(x) \,:=\, \sum_{i=0}^{r-1} x^{q^{ki}} {\zeta_r}^i  \,.\]

We will show that $R(f,g \bmod f)$ is an $r$-th nonresidue in $\mb{F}_q$. Here $g\bmod f$ refers to some representative in $\mb{F}_{q^d}[x]$. We will now show that the resultant is independent of the representative chosen.

\begin{claim}\label{res-mod}
Let $f,g$ be two polynomials over any field. Then,
\[ R(f,g \bmod f) = R(f,g) \,.\]
\end{claim}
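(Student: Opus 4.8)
The claim is that the resultant $R(f,g)$ depends on $g$ only through its residue modulo $f$. The plan is to use the factorization form of the resultant, $R(f,g) = a_m^{\,m}\prod_{\alpha \in \mc{Z}(f)} g(\alpha)$, and the elementary fact that if $g \equiv \tilde g \pmod f$ then $g - \tilde g = f\cdot q$ for some polynomial $q$, so $g(\alpha) = \tilde g(\alpha)$ at every root $\alpha$ of $f$. Plugging this into the product immediately yields $R(f,g) = R(f,\tilde g)$, provided the leading coefficient $a_m$ of $f$ and the multiset $\mc{Z}(f)$ of roots (with multiplicity) are unchanged --- which they are, since $f$ itself is fixed.

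First I would write $\tilde g := g \bmod f$, so that $g = f\cdot q + \tilde g$ for some $q \in \mathbb{F}[x]$. Second, I would invoke Definition~\ref{resultant} to write $R(f,g) = a_m^{\,m}\prod_{\alpha \in \mc{Z}(f)} g(\alpha)$, where $a_m$ is the leading coefficient of $f$ and $\mc{Z}(f)$ is the multiset of its roots in $\overline{\mathbb{F}}$. Third, for each $\alpha \in \mc{Z}(f)$ we have $f(\alpha) = 0$, hence $g(\alpha) = f(\alpha)q(\alpha) + \tilde g(\alpha) = \tilde g(\alpha)$. Substituting term-by-term into the product gives $R(f,g) = a_m^{\,m}\prod_{\alpha \in \mc{Z}(f)} \tilde g(\alpha) = R(f, \tilde g)$, which is exactly the claim.

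One technical subtlety worth a remark: Definition~\ref{resultant} is stated in terms of the leading coefficients of \emph{both} $f$ and $g$ (the $a_m^m b_n^n$ form), and the degree of $\tilde g = g \bmod f$ may be strictly smaller than $\deg g$, so the ``$b_n$'' bookkeeping differs between the two sides. The cleaner route, which I would take, is to use the second expression $R(f,g) = a_m^{\,m}\prod_{\alpha\in\mc{Z}(f)} g(\alpha)$ (the right-hand equality in Definition~\ref{resultant}), which only references the leading coefficient and roots of $f$ and is manifestly a function of the \emph{values} of $g$ on $\mc{Z}(f)$; this sidesteps any degree-drop issue entirely. I expect no real obstacle here --- the only thing to be careful about is not to accidentally appeal to the symmetric two-sided formula, since $R(f, g \bmod f)$ could in principle pick up spurious leading-coefficient factors if one is sloppy; using the asymmetric ``evaluate $g$ at the roots of $f$'' formula makes the argument a one-liner.
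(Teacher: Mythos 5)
Your proof is correct and follows essentially the same route as the paper: both evaluate $g$ and $g \bmod f$ at the roots of $f$, observe they agree there since $g - (g\bmod f)$ is a multiple of $f$, and conclude via the asymmetric formula $R(f,g) = a_m^{\,m}\prod_{\alpha\in\mc{Z}(f)} g(\alpha)$ from Definition~\ref{resultant}. Your remark about avoiding the symmetric $a_m^m b_n^n$ form (because of the possible degree drop of $g \bmod f$) is a sensible clarification that the paper leaves implicit, but it does not change the argument.
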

\begin{proof}
Let $g' := g \bmod f$ be a representative. Using the definition of resultant,
\begin{align*}
R(f,g') &= \prod_{\alpha \in \mc{Z}(f)} g'(\alpha) \\
~ &= \prod_{\alpha \in \mc{Z}(f)} g(\alpha) \quad [\because g'(\alpha)=g(\alpha)] \\
~ &= R(f,g) \,.
\end{align*}
\end{proof}

Clm.\ref{res-mod} implies that,
\[ R(f, g\bmod f) = R(f,g) = \prod_{i=1}^{r'} R(f_i,g) =  \prod_{i=1}^{r'} R(f_i,g \bmod f_i) . \] 

Since $\chi_r$ is multiplicative, we have,

\[ \chi_r(R(f, g \bmod f)) = \prod_{i=1}^{r'}  \chi_r(R(f_i,g \bmod f_i)) = (\zeta_r^{-1})^{r'} . \] 

The last step follows from Cor.~\ref{chi-r} and the fact that $f_i$ are irreducible. Since $r \nmid r'$, we get $\chi_r(R(f,g \bmod f)) \neq 1$ and hence $R(f,g \bmod f)$ is 
an $r$-th nonresidue in $\mb{F}_q$.

The last statement of the theorem (about fields of characteristic $p$) follows in the same way as in the proof of Cor.~\ref{charp}. 

The time complexity is straightforward and further discussed in Sec.\ref{sec-algo}.

\end{proof}

\subsection{Constructing fields -- Proof of Thm.\ref{thm-main-2}} \label{sec-teich}

The result in the previous subsection required the existence and knowledge of $\zeta_r$. 
Now we would like to eliminate those assumptions, hence we will remove the assumption $r|q-1$. First, we will show that if we have a reducible polynomial $f$ satisfying Property \ref{condition} then we can construct $\F_{q^r}$ (equivalently, we can construct an irreducible polynomial of degree $r$). The concepts that we will use are inspired from the proof of \cite[Thm.5.2]{Lens}.

The starting idea is to work with a ``virtual'' $\zeta_r$, i.e.~define the ring $\F_q[\zeta] := \F_q[Y]/\langle \varphi_r(Y) \rangle$, where $\varphi_r(Y) := \sum_{0\le i\le r-1} Y^i$, and let $\zeta$ be the residue-class of $Y \bmod \varphi_r(Y)$ in that ring. 
Let $e$ be the smallest positive integer such that $r|q^e-1$, in other
words, the multiplicative order of $q$ modulo $r$. 
Then
$\varphi_r(Y)$ completely splits over $\F_{q^e}$ as 
$$\varphi_r(Y) \,=\, \prod_{i\in \F_r^*} (Y-\eta^i) \,,$$ 
where $\eta\in\F_{q^e}$ is a primitive $r$-th root of unity, 
but we may not have access to $\eta$ and in general not even to $\F_{q^e}$. 
So we will do computations over the {\em ring} 
$\F_q[\zeta]$ and try to construct the field $\F_{q^r}$. 

Clearly, $\zeta$ has order $r$ in the unit group $\F_q[\zeta]^*$. For each integer $a\in\F_r^*$ there is a unique ring automorphism $\rho_a$ of $\F_q[\zeta]$ that fixes $\F_q$ and maps $\zeta\mapsto \zeta^a$. The set $\{\rho_a \,\mid\, a\in\F_r^*\} =: \Delta$ forms a group (under map composition) that is isomorphic to $\F_r^*$. If we consider the elements of the ring fixed under $\Delta$ then we get back $\F_q$, i.e.~$\F_q[\zeta]^\Delta = \F_q$ \cite[Prop.4.1]{Lens}. 

Like Sec.\ref{sec-red-poly}, suppose we have an $f = f_1 f_2 \cdots f_{r'} \in\F_q[x]$ with $f_i$'s being irreducibles of degree $d = rk$ and $r \nmid r'$. When we move to $\F_{q^e}$, $f_i$ factors into $\ell:=\gcd(k,e) = \gcd(d,e)$ many irreducibles each of degree $d/\ell = kr/\gcd(k,e) =: k'r$. Since $r\nmid e$, we have that $r\nmid \ell$.

Our ring $\F_{q}[\zeta]$ is a semisimple algebra that decomposes as:

$$\F_{q}[\zeta] \,\cong\, \varprod_{i \,\in\, \F_r^*/\langle q\rangle}\F_{q^e}[Y]/\langle Y-\eta^i\rangle \,,$$ 

and the proof given in Sec.\ref{sec-red-poly} holds simultaneously over each of 
the component fields ($\cong\F_{q^e}$) of $\F_{q}[\zeta]$. Hence, simply by Chinese remaindering, we get the equality:
\begin{equation}\label{eqn-over-zeta}
R(f, g \bmod f)^{\frac{q^e-1}{r}} \,=\, \zeta^{-r'\ell} \,,
\end{equation}
where, as expected, $g(x)$ is the following Lagrange resolvent over 
$\F_{q}[\zeta]$,
\[ g(x) \,:=\, \sum_{i=0}^{r-1} x^{q^{ek'i}} {\zeta}^i  \,.\]
(Also, note that we are now computing mod and resultant over the base ring $\F_{q}[\zeta]$.)

\smallskip\noindent {\bf Teichm\"uller subgroup.}
Let $r''$ be an integer representative for $(r'\ell)^{-1}\bmod r$. Let $q^e-1=ur^t$ such that $r\nmid u$ and $t\ge1$. Define $\delta := R(f, g \bmod f)^{ur''}$. Then, by Eqn.\ref{eqn-over-zeta}, we have $\delta^{r^{t-1}} = \zeta^{-1}$. In particular, $\delta$ has order $r^t$ in $\F_q[\zeta]^*$. Define a function $\omega$ that maps any integer $a$ to $a^{r^{t-1}} \bmod r^t$. Note that, by binomial expansion, $(a+r)^{r^{t-1}} \equiv a^{r^{t-1}} \bmod r^t$. In other words, value of $\omega(a)$ only depends on $a\bmod r$. Now we come to the key definition, inspired from \cite{Lens},
$$c \,:=\, \left(\prod_{a\in[r-1]} \rho_a^{-1}\left(
\delta^{\omega(a)}\right)\right)  \,.$$

The following properties can be easily verified:
\begin{itemize}
\item $c^{r^{t-1}} = \zeta$, 
\item $c$ has order $r^t$ in $\F_q[\zeta]^*$, and
\item for all $\rho_b\in\Delta$, $\,\rho_b(c) = c^{\omega(b)}$.
\end{itemize}

At this point recall the definition of {\em Teichm\"uller subgroup} w.r.t. $\F_q$:
\begin{align*}
T_{\F_q} \,:=\, \big\lbrace \epsilon\in\F_q[\zeta]^* \,\mid\, & \epsilon\text{ has $r$-power order, and } \\
& \forall\rho_a\in\Delta,\, \rho_a(\epsilon) = \epsilon^{\omega(a)} \big\rbrace \,.
\end{align*}

By the properties above and invoking \cite[Thm.5.1]{Lens}, we can deduce that $c$ is a generator of $T_{\F_q}$.

Consider the extension ring $\F_q[\zeta][c^{1/r}] := \F_q[\zeta][X]/\langle X^r-c \rangle$, where $c^{1/r}$ is the residue class of  $X\bmod X^r-c$ in the ring. By \cite[Prop.4.3]{Lens} we have: $\forall b\in\F_r^*$, $\rho_b$ extends uniquely to a ring automorphism of $\F_q[\zeta][c^{1/r}]$ such that $c^{1/r}\mapsto (c^{1/r})^{\omega(b)}$. Thus, $\Delta$ can now be seen as a {\em group of ring automorphisms} of $\F_q[\zeta][c^{1/r}]$.

Now we have the following nice way to construct a field extension.

\begin{theorem}[Field extension]\label{thm-Fq-to-r}
The fixed subring $\F_q[\zeta][c^{1/r}]^\Delta$ is isomorphic to $\F_{q^r}$. Moreover, given $f$, $\F_{q^r}$ can be constructed in deterministic poly($\deg(f),r\log q$)-time.
\end{theorem}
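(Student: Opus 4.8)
The plan is to show that $\F_q[\zeta][c^{1/r}]^\Delta$ is a field of size $q^r$ by carefully analyzing the $\Delta$-action and then reading off the field structure; the computational statement will then follow by tracking the cost of each algebraic operation performed en route to building $c$ and extracting the fixed subring. First I would record that $\F_q[\zeta][c^{1/r}]$ is a free $\F_q[\zeta]$-module of rank $r$, hence a free $\F_q$-module of rank $r(r-1)$, and that $\Delta \cong \F_r^*$ acts on it by ring automorphisms fixing $\F_q$ (this is exactly the content of the setup, using \cite[Prop.4.3]{Lens} to extend each $\rho_b$). Since $|\Delta| = r-1$ is coprime to $p$ (as $\Delta$ embeds in $\F_r^*$ and $p$ acts invertibly here — formally one uses that $\zeta$ and $c$ have $r$-power order and $r \ne p$), the trace-averaging idempotent $e_\Delta := \frac{1}{r-1}\sum_{\rho\in\Delta}\rho$ is well-defined, and standard invariant-theory/Maschke-type reasoning gives $\dim_{\F_q} \F_q[\zeta][c^{1/r}]^\Delta = r$.

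Next I would identify $\F_q[\zeta][c^{1/r}]^\Delta$ as a field. The key point is that $\F_q[\zeta][c^{1/r}]$, while defined over the ring $\F_q[\zeta]$ (which is not a field when $r\nmid q-1$), decomposes via Chinese remaindering over its component fields $\cong \F_{q^e}$ exactly as $\F_q[\zeta]$ does; over each component $X^r - c$ is irreducible precisely because $c$ (being a generator of the Teichmüller subgroup, using the properties $c^{r^{t-1}}=\zeta$ and $c$ of order $r^t$) maps to a non-$r$-th-power element, so each component of $\F_q[\zeta][c^{1/r}]$ is $\cong \F_{q^{er}}$. Taking $\Delta$-invariants commutes with this decomposition, and the $\Delta$-action permutes/twists the components so that the fixed subring is a single field; its dimension over $\F_q$ is $r$, forcing it to be $\cong \F_{q^r}$. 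This is where I expect the main obstacle to lie: one must argue that the $\Delta$-invariant subring is connected (a field, not a product of smaller fields) — this is precisely the role of \cite[Thm.5.1]{Lens} characterizing $c$ as a generator of $T_{\F_q}$, and I would lean on that theorem to conclude that the "right" twisted-diagonal copy is singled out, exactly as in the proof of \cite[Thm.5.2]{Lens}.

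Finally, for the complexity bound I would walk through the construction: building the ring $\F_q[\zeta]$ costs poly$(r\log q)$; the distinct-degree factorization of $f$ and isolation of the relevant factor $f = f_1\cdots f_{r'}$ costs poly$(\deg f, \log q)$ by \cite[Thm.7.5.3]{BS}; forming the Lagrange resolvent $g(x)$ requires repeated $q$-power Frobenius applications modulo $f$ over $\F_q[\zeta]$, which is poly$(\deg f, r\log q)$; computing the resultant $R(f, g\bmod f)$ is poly$(\deg f, r\log q)$ by Lem.\ref{lem-res}(3); raising to the power $ur''$ and applying the automorphisms $\rho_a^{-1}$ to assemble $c$ are each poly$(r\log q)$ (note $t \le \log_r(q^e-1) = O(r\log q)$, and $\omega(a)$ is computable by fast exponentiation); and finally extracting a basis for the fixed subring $\F_q[\zeta][c^{1/r}]^\Delta$ amounts to computing the image of the idempotent $e_\Delta$ on a basis, i.e.\ linear algebra over $\F_q$ in dimension $r(r-1)$, costing poly$(r\log q)$. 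Summing these gives the claimed poly$(\deg(f), r\log q)$ bound, completing the proof.
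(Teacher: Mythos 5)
Your overall route is the same as the paper's: at the decisive point (that the fixed subring is a field, indeed $\cong\F_{q^r}$) you fall back on \cite[Thm.5.1]{Lens} via the fact that $c$ generates the Teichm\"uller subgroup, which is exactly what the paper does, and your complexity accounting (compute $c$ from $f$ via the resultant of the resolvent over the ring $\F_q[\zeta]$, then extract the invariants by linear algebra in dimension $r(r-1)$ over $\F_q$) matches the paper's ``straightforward'' sketch, just spelled out in more detail.

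However, one ingredient of your argument is genuinely wrong: the claim that $|\Delta|=r-1$ is coprime to $p$. This can fail (e.g.\ $p=2,r=3$, or $p=3,r=7$); the parenthetical justification you give only uses $r\neq p$, which controls the order of $\zeta$ and $c$ but says nothing about $r-1$ modulo $p$. Consequently the averaging idempotent $e_\Delta=\frac{1}{r-1}\sum_{\rho\in\Delta}\rho$ need not exist in characteristic $p$, so both your Maschke-style dimension count $\dim_{\F_q}\F_q[\zeta][c^{1/r}]^\Delta=r$ and, more importantly, your proposed algorithm for extracting the fixed subring (``computing the image of $e_\Delta$ on a basis'') break down whenever $p\mid r-1$. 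The repair is easy and is what the paper implicitly does: compute the fixed subring as $\bigcap_{a}\ker(\rho_a-\mathrm{id})$ (it suffices to take $a$ a generator of $\F_r^*$), which is plain kernel computation over $\F_q$ and needs no invertibility of $r-1$; and the dimension/field-structure statement is anyway delivered wholesale by \cite[Thm.5.1]{Lens}, so the Maschke step can simply be dropped. A smaller caveat: taking $\Delta$-invariants does not literally ``commute'' with the Chinese-remainder decomposition, since $\Delta$ permutes the components with stabilizer $\langle q\rangle$; you acknowledge this by deferring to Lenstra, but the phrasing should reflect that the invariants are computed for the permutation-plus-twist action on the whole product, not componentwise.
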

\begin{proof}
It directly follows from \cite[Thm.5.1]{Lens} that $\F_q[\zeta][c^{1/r}]^\Delta \cong \F_{q^r}$.

From the above discussion it can be seen that, given $f$, we can compute $c$. Hence, we have a representation of the ring $\F_q[\zeta][c^{1/r}]$ in terms of a linear basis $\mathcal{B}$ over $\F_q$ (\& their multiplication relations). Because of the properties of $\rho_b(\zeta)$ and $\rho_b(c^{1/r})$ we also have a description of the action of $\Delta$ on $\F_q[\zeta][c^{1/r}]$ in terms of $\mathcal{B}$. Thus, we can compute the fixed subring $\F_q[\zeta][c^{1/r}]^\Delta$ efficiently. It is straightforward to get the time complexity estimate.
\end{proof}

\begin{proof}[Proof of Thm.\ref{thm-main-2}]
From $f$, by Thm.\ref{thm-Fq-to-r}, we can get an irreducible polynomial $g$ over $\F_q$ of degree $r$. Let $m$ be any $r$-power. Then, by \cite[Thm.1.1]{Lens}, we can construct $\F_{q^m}$ using $g$ efficiently.
\end{proof}


\section{Algorithm} \label{sec-algo}
For concreteness, we state our algorithm (Algo.\ref{alg1}) for constructing  $r$-th nonresidue in this section.  The proof of correctness for this algorithm follows directly from Thm.\ref{thm-main-1}.

The input to this algorithm is a polynomial $f(x) \in \mb{F}_p[x]$ satisfying Property  \ref{condition}, $\zeta_r \in \mathbb{F}_p$, and the finite field $\mathbb{F}_{q'} $ of characteristic $p$ where we want to construct $r$-th nonresidue. The algorithm outputs an $r$-th nonresidue in $\mathbb{F}_{q'} $.

Note that, since $f(x)$ satisfies Property \ref{condition}, wlog (by the distinct degree factorization) $f = f_1 f_2 \cdots f_{r'}$ such that, 
\begin{itemize}
\item $f_i$'s are irreducible of degree $d = rk$, and
\item $r \nmid r'$.
\end{itemize} 
\begin{algorithm}
\caption{BIMS}
\label{alg1}
\begin{algorithmic}[1]
\Statex \hspace{-4mm} \textbf{Input} : $f(x),\, \zeta_r \in \mathbb{F}_p,\, \F_{q'} $,  where $q'=p^n$. 
\Statex \hspace{-4mm} \textbf{Output} : $r$-th nonresidue in $\F_{q'}$. 
\If { ($r|n$) } 
	\State Define $g(x)= \sum_{i=0}^{r-1} x^{v^i} {\zeta_r}^i  \mod h(x)$ 				\Comment{where $v:= p^{n/r}$ and $h(x)$ is the minimal polynomial of $\mathbb{F}_{q'}$ over $\mathbb{F}_p$ .}
	\State Output $g(x)$.
\Else
	\State Define $g(x)= \sum_{i=0}^{r-1} x^{v^i} {\zeta_r}^i  \mod f(x)$ 				\Comment{where $v:= p^{k}$.} 
	\State Output  $R(g(x),f(x))$.
\EndIf
\end{algorithmic}
\end{algorithm}

\noindent {\em Time complexity analysis-} 

One can refer to \cite{S05} for basic arithmetic operations.
Polynomial computation in Step 2, takes time $\tilde{O}(rn\log{p}\log q')$  using repeated squaring. Similarly, Step 5 can be done in  $\tilde{O}(r^2k\log{p}$ $\deg(f))$. The most expensive part of the algorithm is the resultant computation in Step 6. The same can be done in time $\tilde{O}(\deg(f)^{\omega_0}\log{p})$, where $\omega_0 <2.373$.

\section{Some special case applications} \label{sec-spl-case}
\subsection{The special case of $r=2$}

Notice that for $r=2$, we have $\zeta_2=-1$ available in any finite field with odd characteristic. Thus, using Thm.\ref{thm-main-1} and an $f$ (Property \ref{condition}), we can construct a quadratic nonresidue. The same can also be calculated using Stickelberger lemma directly.

A striking difference, in the case of $r=2$, is that using Stickelberger lemma (Eqn.\ref{SL}) discriminant is the quadratic nonresidue. This implies that over even degree finite field extensions, the derivative of the minimal polynomial of the extension is a quadratic nonresidue. We formally state this property below.
 
\begin{lemma}[Derivative]
Given a finite field $\F_{q^d} = \mb{F}_{q}[x]/\langle f \rangle$ with even $d=\deg(f)$ and $4|(q-1)$, $f'$ is a quadratic nonresidue in $\mb{F}_{q}[x]/\langle f \rangle$. 
\end{lemma}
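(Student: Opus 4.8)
The plan is to reduce the statement to Stickelberger's Lemma (Eqn.~\ref{SL}) by identifying the right discriminant. Recall that for a field $\F_{q^d} = \F_q[x]/\langle f\rangle$ with $f$ irreducible of degree $d$, the element $f'(\alpha)$ (where $\alpha$ is the residue class of $x$) is, up to the square factor $a_d^{-1}$ and sign, the discriminant of $f$: indeed $\Delta(f) = (-1)^{d(d-1)/2} a_d^{-1} R(f,f')$ and, by Lem.~\ref{norm}, $R(f,f') = N_{\F_{q^d}/\F_q}(f')$. So the first step is to rewrite $\Delta(f)$ in terms of the norm of $f'(\alpha)$, and then invoke Lem.~\ref{projection}: $f'(\alpha)$ is a quadratic nonresidue in $\F_{q^d}$ if and only if its norm $N_{\F_{q^d}/\F_q}(f')$ is a quadratic nonresidue in $\F_q$.

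Next I would handle the square factors. Since $a_d$ is a nonzero element of $\F_q$, whether $a_d^{-1}$ contributes to the quadratic character depends on whether $a_d$ is a square in $\F_q$; but $a_d^{2m-2}$ type factors in the discriminant definition are squares, and more carefully one checks $\Delta(f)$ and $R(f,f')$ differ by $(-1)^{d(d-1)/2} a_d^{-1}$. The sign $(-1)^{d(d-1)/2}$: since $d$ is even, $d(d-1)/2$ has the same parity as $d/2$, so this sign is $\pm1$ depending on $d \bmod 4$. Here is where the hypothesis $4 \mid (q-1)$ is used — it guarantees $-1$ is a square in $\F_q$ (as already noted in Sec.~\ref{main}), so the sign is irrelevant to quadratic character. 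For the leading coefficient, I would either assume $f$ monic (WLOG, since the minimal polynomial is monic) so $a_d = 1$ and the factor disappears entirely, or argue that $a_d^{-1} = a_d \cdot a_d^{-2}$ has the same character as $a_d$; with $f$ monic this is a non-issue.

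With the bookkeeping done, the core is: $\chi_2(N_{\F_{q^d}/\F_q}(f')) = \chi_2(\Delta(f))$ up to the established square/sign factors, and by Stickelberger's Lemma $\chi_2(\Delta(f)) = (-1)^{d - s}$ where $s$ is the number of distinct irreducible factors of $f \bmod p$ — but $f$ is \emph{irreducible} over $\F_q$, hence (viewed appropriately) behaves like a single prime, and the relevant count is: $f$ has exactly one irreducible factor over $\F_q$, of even degree $d$. Actually the cleanest route, to avoid re-deriving Stickelberger over $\F_q$ rather than $\F_p$, is to apply Thm.~\ref{key} directly with $r = 2$: an irreducible $f$ of even degree $d = 2k$ gives that the Lagrange resolvent $L_{f,2} = x^{q^k} - x \bmod f$ satisfies $(L_{f,2})^{(q^d-1)/2} = -1$, so $L_{f,2}$ is a quadratic nonresidue in $\F_{q^d}$. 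Then I would show $f'(\alpha)$ and $L_{f,2}$ lie in the same coset of squares, i.e.\ their ratio is a square; equivalently that $\chi_2(N(f')) = \chi_2(N(L_{f,2})) = -1$. This ratio computation is the step I expect to be the main obstacle: it amounts to the classical fact (essentially Stickelberger/Swan for $r=2$) that $\disc(f)$ equals $(-1)^{d(d-1)/2} N(f') $ and that this has the same character as $(-1)^{?} \prod_{i<j}(\alpha^{q^i} - \alpha^{q^j})$, which one relates to $\big(\prod_{i<j}(\alpha^{q^i}-\alpha^{q^j})\big)^2$ being a square and a Vandermonde-type manipulation connecting it to $L_{f,2}$.

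Alternatively — and this is probably the intended one-line proof — since the lemma says "using Stickelberger lemma (Eqn.~\ref{SL}) discriminant is the quadratic nonresidue," I would simply cite Eqn.~\ref{SL}: $f$ irreducible of even degree means it has an odd number (namely one) of even-degree irreducible factors, so $\Delta(f) = \left(\frac{\Delta(f)}{q}\right)$-nonresidue in $\F_q$; then $\Delta(f) = (-1)^{d(d-1)/2} a_d^{-1} R(f,f') = \pm R(f,f') = \pm N_{\F_{q^d}/\F_q}(f')$ (WLOG $f$ monic), and since $4\mid(q-1)$ the sign is a square, so $N_{\F_{q^d}/\F_q}(f')$ is a quadratic nonresidue in $\F_q$; finally Lem.~\ref{projection} lifts this to say $f'$ is a quadratic nonresidue in $\F_q[x]/\langle f\rangle$. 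The main obstacle is purely the careful tracking of the sign and leading-coefficient factors to confirm they do not flip the quadratic character under the hypothesis $4 \mid (q-1)$.
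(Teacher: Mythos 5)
Your final route is exactly the paper's proof: apply Stickelberger (Eqn.~\ref{SL}) to the irreducible even-degree $f$ to get that $\Delta(f)$ is a quadratic nonresidue in $\F_q$, use $\Delta(f) = (-1)^{d(d-1)/2} a_d^{-1} R(f,f')$ with $f$ monic and $4\mid(q-1)$ killing the sign, identify $R(f,f') = N_{\F_{q^d}/\F_q}(f')$ by Lem.~\ref{norm}, and lift by Lem.~\ref{projection}. The earlier detour through Thm.~\ref{key} and comparing $f'$ with $L_{f,2}$ is unnecessary, but the argument you settle on is correct and matches the paper.
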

\begin{proof}
Using Stickelberger lemma (Eqn.\ref{SL}) we know that the discriminant is a quadratic nonresidue in $\mathbb{F}_q$. Since,
$$\Delta (f) \,=\, (-1)^{d(d-1)/2} a_d^{-1}\cdot R(f,f') \,,$$  
where $a_d=1$ is the leading coefficient of $f(x)$, we can deduce that $R(f,f')$ is a quadratic nonresidue in $\mathbb{F}_q$.

Using  Lem.\ref{norm} we get that $N_{\F_{q^d}/\F_q}(f')$ is a quadratic nonresidue in $\mathbb{F}_q$, and using Lem.\ref{projection} we get that $f'(x)$ is a quadratic nonresidue in $\mb{F}_{q}[x]/\langle f \rangle$.
\end{proof}

\subsection{Cases for which $\zeta_r$ is known}
Since our first main theorem, Thm.\ref{thm-main-1}, requires $\zeta_r$, in this section we state some known methods to construct the same.

One of the most significant results on this is by Pila \cite{Pila}. He generalized Schoof's \cite{Schoof} elliptic curve point-counting algorithm to Fermat curves, and as an application gave an algorithm for factoring the $r$-th cyclotomic polynomial over $\mathbb{F}_p$. The algorithm is deterministic and runs in time polynomial in $\log{ p}$ for a fixed $r$. If $r|p-1$ then the factorization of the $r$-th cyclotomic will give us $\zeta_r \in \mathbb{F}_p$.

A limitation of Pila's algorithm is that it can give us $\zeta_r$ only in prime fields. Below we state few results that can give $\zeta_r$ in extensions of prime fields.
 
The following theorem by Bach, von zur Gathen and Lenstra \cite{BGL} gives an elegant way to construct  $\zeta_r \in \mathbb{F}_q $ using ``special'' irreducible polynomials.   

\begin{theorem}\cite[Thm.2]{BGL} \label{BGL}
Given two prime numbers $p$ and $r$, the $h=\ord_r(p)$, the explicit data for $\mathbb{F}_{p^h}$; and given for each prime $\ell|(r-1)$ but not dividing $h$, an irreducible polynomial $g_\ell$ of degree $\ell$ in $\mathbb{F}_p[X]$, there is a deterministic poly($rh\log(p)$)-time algorithm to construct a primitive $r$-th root of unity in $\mathbb{F}_{p^h}$.
\end{theorem}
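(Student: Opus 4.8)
The plan is to reduce the problem to a recursion on the prime $r$, driven by Gauss sums (Lagrange resolvents of the cyclotomic Galois action), in the same spirit as the ``virtual $\zeta$'' computation of Section~\ref{sec-teich}. First note the chain of equivalences: constructing a primitive $\zeta_r\in\F_{p^h}$ is the same as constructing a root of the $r$-th cyclotomic polynomial $\varphi_r(x)=\sum_{0\le i\le r-1}x^i$ in $\F_{p^h}$, which — since $h=\ord_r(p)$ forces all $r-1$ primitive $r$-th roots of unity to lie in $\F_{p^h}$ — is the same as splitting $\varphi_r$ completely over $\F_{p^h}$, and is also equivalent to exhibiting an $r$-th nonresidue $\beta$ there (then $\zeta_r=\beta^{(p^h-1)/r}$). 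I will construct the field $\F_p[\zeta_r]\ (\cong\F_{p^h})$ abstractly, together with a distinguished primitive $r$-th root of unity inside it, and then invoke Lenstra's deterministic explicit-isomorphism algorithm (\cite[Thm.1.1]{Lens}) to transport that root into the \emph{given} model of $\F_{p^h}$; this last step costs $\poly(h\log p)$ and reduces everything to the abstract construction.

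To build $\F_p[\zeta_r]$ I proceed by recursion on the largest prime factor of $r-1$. The base case $r=2$ is immediate, as $\zeta_2=-1$. For the inductive step, pick for each prime $\ell\mid r-1$ a character $\chi$ of $(\Z/r)^*$ of order $\ell$ and form the Lagrange resolvent (Gauss sum)
\[
\tau_\chi \,:=\, \sum_{a\in(\Z/r)^*}\chi(a)^{-1}\,x^a
\]
in the ring $\F_p[\zeta_\ell][x]/\langle\varphi_r(x)\rangle$, where $\zeta_\ell$ is a primitive $\ell$-th root of unity supplied by the recursion (see the next paragraph). The cyclotomic automorphism $\rho_b:x\mapsto x^b$ acts by $\tau_\chi\mapsto\chi(b)\tau_\chi$, so $\tau_\chi^{\ell}$ is fixed by every $\rho_b$ and hence lands in $\F_p[\zeta_\ell]$, where it can be written down explicitly — classically it is, up to a root of unity, a product of Jacobi sums (the Gauss--Stickelberger relations). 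Conversely, the resolvent-inversion identity expresses the virtual $\zeta_r$ (hence each root of $\varphi_r$) as an explicit $\F_p[\zeta_{r-1}]$-linear combination of the $\tau_\chi$ over all characters $\chi$ of $(\Z/r)^*$. Thus, once the $\zeta_\ell$ (and a compatible $\zeta_{r-1}$) are in hand, constructing $\zeta_r$ amounts to solving $T^\ell=\tau_\chi^\ell$ for each $\ell\mid r-1$ — an $\ell$-th root-finding problem with $\ell<r$, dispatched by the same recursion — and then reassembling via the inversion formula.

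It remains to say where the hypotheses enter. The recursion needs, for each prime $\ell\mid r-1$, an ambient field containing $\zeta_\ell$ and the further roots of unity occurring in the sub-recursion for $\ell$; these have orders dividing $\ell-1$, so their multiplicative orders modulo $p$ are correspondingly small. When $\ell\mid h$ one can stay inside $\F_{p^h}$: using \cite[Thm.1.2]{Lens} one descends to the appropriate subfield, and the order-$\ell$ resolvents may be taken with respect to the Frobenius action on $\F_{p^h}$ itself (exactly as in Theorem~\ref{key}) rather than an external $\zeta_\ell$. When $\ell\nmid h$, the given irreducible polynomial $g_\ell$ supplies $\F_{p^\ell}$, and the fields $\F_{p^h}$ and $\{\F_{p^\ell}\}$ together — closed under taking subfields and, via \cite[Lem.3.4]{S89}, under composita of coprime-degree pieces — furnish every field needed to host the resolvents arising in the construction of $\zeta_r$.

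The main obstacle is precisely this bookkeeping: verifying that the supplied data is \emph{sufficient} at every node of the recursion, controlling the recursion depth and total cost so that it stays $\poly(rh\log p)$, handling prime-power (rather than just prime) factors of $r-1$ by iterating the prime case, and checking the Gauss- and Jacobi-sum identities that make each descended power $\tau_\chi^\ell$ explicitly computable in $\F_p[\zeta_\ell]$. Everything else is routine finite-field arithmetic, and once these points are settled the final appeal to \cite[Thm.1.1]{Lens} delivers $\zeta_r$ in the prescribed model of $\F_{p^h}$ within the stated time bound.
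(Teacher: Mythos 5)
The paper offers no proof of Theorem~\ref{BGL} to compare against: it is quoted verbatim from \cite{BGL}, so your reconstruction has to stand on its own. In outline you are in the right spirit of the original argument (Lagrange resolvents/Gauss sums for the $(\Z/r)^*$-action on $\F_p[x]/\langle \varphi_r\rangle$, their $\ell$-th powers controlled by Jacobi sums, and a recursion over primes $\ell\mid r-1$ in which the supplied $g_\ell$ provide the degree-$\ell$ extensions exactly when $\ell\nmid h$), but the sketch has genuine gaps beyond the ones you flag yourself.

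Concretely: (i) the uniform Kummer-theoretic treatment fails at the prime $\ell=p$ when $p\mid r-1$ (e.g.\ $p=2$, $r=7$). In characteristic $p$ there are no nontrivial $p$-th roots of unity, so there is no $\zeta_p$, no order-$p$ character realizable in your rings, and no ``compatible $\zeta_{r-1}$''; moreover your inversion identity $x^b=(r-1)^{-1}\sum_\chi \chi(b)\tau_\chi$ divides by $r-1$, which is $0$ in $\F_p$ in that case. The $p$-part of $r-1$ must be handled by a separate (Artin--Schreier/linear-algebra, or Frobenius-inversion) step rather than by $\ell$-th-root extraction via resolvents, and your recursion does not provide for this. (ii) The claim that when $\ell\mid h$ the order-$\ell$ resolvents ``may be taken with respect to the Frobenius action on $\F_{p^h}$ itself rather than an external $\zeta_\ell$'' is not correct as stated: the resolvent of Thm.~\ref{key} has $\zeta_\ell$ as coefficients, so $\zeta_\ell$ (or the virtual-$\zeta_\ell$ ring of Sec.~\ref{sec-teich}, with the extra care needed because $\F_p[\zeta_\ell]$ is only a product of fields) is still required; what $\ell\mid h$ actually buys is a degree-$\ell$ irreducible polynomial extracted from the given $\F_{p^h}$, not a way to dispense with $\zeta_\ell$. (iii) You explicitly defer ``the bookkeeping'': sufficiency of the supplied data at every node of the recursion, prime-power divisors of $r-1$, depth/cost control within poly$(rh\log p)$, and the Gauss--Jacobi identities making $\tau_\chi^\ell$ explicitly computable. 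These are precisely the content of the theorem, so as it stands the proposal is a plausible plan rather than a proof; to make it one you would need to spell out the descent chain $\langle p\rangle\subset\cdots\subset(\Z/r)^*$ with prime steps, treat $\ell=p$ separately, and verify at each step which root of unity and which auxiliary extension is available.
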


We immediately get the following.

\begin{corollary}[Inspired by BGL \cite{BGL}]
Let prime $r|q-1$. If for each prime $\ell \mid (r-1)$ we are given an irreducible polynomial $h_\ell\in\mathbb{F}_q[x]$ of degree divisible by $\ell$, then we can construct $\zeta_r\in\F_q$.   
\end{corollary}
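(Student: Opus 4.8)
The plan is to feed the right inputs into Theorem~\ref{BGL}. Since $r$ is prime and $r\mid q-1$ with $q=p^d$, the multiplicative order $h:=\ord_r(p)$ divides $d$; hence $\F_{p^h}$ is a subfield of the given $\F_q$, and by \cite[Thm.1.2]{Lens} we can, deterministically in poly$(d\log p)$-time, produce explicit data for $\F_{p^h}$ (i.e.~a degree-$h$ irreducible over $\F_p$) realized as a concrete subfield of $\F_q$, which in particular fixes an embedding $\F_{p^h}\hookrightarrow\F_q$. This supplies the ``$\F_{p^h}$'' hypothesis of Theorem~\ref{BGL}.

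Next I would turn each given $h_\ell$ into an irreducible polynomial of degree \emph{exactly} $\ell$ over $\F_p$, as Theorem~\ref{BGL} demands. Fix a prime $\ell\mid(r-1)$ and write $\deg h_\ell=\ell m_\ell$. The quotient $\F_q[x]/\langle h_\ell\rangle$ is a field with $p^{d\ell m_\ell}$ elements, and since $\ell\mid d\ell m_\ell$ it contains a (unique) subfield isomorphic to $\F_{p^\ell}$. Applying \cite[Thm.1.2]{Lens} again, I extract from this field an irreducible $g_\ell\in\F_p[x]$ of degree $\ell$ -- any element outside $\F_p$ serves as a generator, because $\ell$ is prime so $\F_p$ is the only proper subfield -- in deterministic time polynomial in $\deg h_\ell$, $d$ and $\log p$.

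With $h$, the explicit data for $\F_{p^h}$, and the polynomials $g_\ell$ for those primes $\ell\mid(r-1)$ with $\ell\nmid h$ now in hand, Theorem~\ref{BGL} outputs a primitive $r$-th root of unity in $\F_{p^h}$ in deterministic poly$(rh\log p)$-time; transporting it along the embedding $\F_{p^h}\hookrightarrow\F_q$ from the first step yields $\zeta_r\in\F_q$. Every step is deterministic and polynomial time in the bit-size of the inputs, so the corollary follows.

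The only real obstacle is an interface mismatch: Theorem~\ref{BGL} wants degree-exactly-$\ell$ polynomials over the prime field $\F_p$, while the hypothesis only hands us irreducibles over $\F_q$ whose degree is merely divisible by $\ell$. The resolution is the elementary observation that $\ell\mid\deg h_\ell$ forces $\F_{p^\ell}$ to lie inside $\F_q[x]/\langle h_\ell\rangle$, after which the subfield-construction machinery of \cite{Lens} finishes the job; one should also just note that $h=\ord_r(p)$ divides $d$ (immediate from $r\mid p^d-1$), so that $\F_{p^h}$ is indeed available as a subfield of the given $\F_q$.
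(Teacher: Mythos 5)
Your proposal is correct and follows essentially the same route as the paper: convert each $h_\ell$ into an irreducible of degree exactly $\ell$ via Lenstra's subfield/isomorphism machinery, feed these (together with explicit data for $\F_{p^h}$, $h=\ord_r(p)\mid d$) into Thm.~\ref{BGL}, and embed the resulting $\zeta_r\in\F_{p^h}$ into $\F_q$. The paper's two-line proof compresses exactly these steps, so your write-up is simply a more detailed version of the same argument.
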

\begin{proof}
Using $h_\ell$ we can construct an irreducible polynomial of degree $\ell$ \cite[Thm.1.1]{Lens}. Using Thm.\ref{BGL} on these, we can construct $\zeta_r \in \mathbb{F}_q$. 
\end{proof}

There are also some other methods for finding $\zeta_r \in \mathbb{F}_q$ that are based on the factorization pattern of $q-1$. We present one such result and its proof. 

\begin{theorem}[Tsz-Wo Sze \cite{Sze}]
We can find $\zeta_r \in \mathbb{F}_q$ if $q-1 = r^e t$, where $r+t = \poly(\log q)$. 
\end{theorem}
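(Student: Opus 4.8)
The plan is to reduce finding $\zeta_r$ to producing a single element of $\F_q^*$ whose multiplicative order is divisible by $r$, and then to observe that all but a $\poly(\log q)$-size set of elements works. First I would write $q-1=r^e t$ with $r\nmid t$ (this is the only reading under which ``$r+t$ small'' is a genuine hypothesis; if $e=0$ then $\zeta_r\notin\F_q$ and there is nothing to do), at the cost of at most $\log_r q$ trial divisions by the given prime $r$; note that then $q-1=r^e t\ge 2t$. The crux is the following elementary fact: for \emph{any} $\gamma\in\F_q^*$, the element $\delta:=\gamma^t$ has order a power of $r$. Indeed, writing $\ord(\gamma)=r^j u$ with $r\nmid u$, $u\mid t$, and $0\le j\le e$, the coprimality $\gcd(t,r)=1$ forces $\gcd(t,\ord(\gamma))=\gcd(t,u)=u$, hence $\ord(\delta)=\ord(\gamma)/u=r^j$. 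Therefore, as soon as $\gamma^t\ne1$, the sequence $\delta,\delta^{r},\delta^{r^2},\dots$ first reaches $1$ after exactly $j\ge1$ steps, and the term just before it, $\delta^{r^{j-1}}$, has order $r^j/r^{j-1}=r$: it is a primitive $r$-th root of unity.

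So it remains only to exhibit one $\gamma$ with $\gamma^t\ne1$. The bad $\gamma$'s are exactly the roots of $X^t-1$ in $\F_q$, of which there are at most $t$ by the root bound in Thm.\ref{properties}. Since $q\ge 2t+1$, I would simply list $t+1$ distinct nonzero elements of $\F_q$ (for instance the first $t+1$ nonzero elements in the lexicographic enumeration of $\F_q=\F_p[x]/\langle h\rangle$ by polynomials over $\F_p$ of degree $<\deg h$), and at least one of them, say $\gamma$, must satisfy $\gamma^t\ne1$.

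Assembling the algorithm: compute $t$; run through the $t+1$ candidates computing $\gamma^t$ until a value $\delta\ne1$ appears; form $\delta,\delta^r,\delta^{r^2},\dots$ by repeated exponentiation until $1$ shows up; output the preceding term as $\zeta_r$. Correctness is exactly the two observations above. For the running time, there are $O(t)$ exponentiations to the power $t$ (each $\tilde O(\log t)$ multiplications in $\F_q$) followed by at most $e\le\log_r q$ further exponentiations to the power $r$, and every $\F_q$-operation costs $\poly(\log q)$; since $t\le r+t=\poly(\log q)$, the total is $\poly(\log q)$. I do not expect a genuine obstacle here: the single point that must be stated with care is the use of $\gcd(t,r)=1$, which holds because $r^e$ is the exact power of $r$ dividing $q-1$, and the minor observation that the $(t+1)$-element candidate list always exists, which follows from $q-1\ge 2t$ whenever $e\ge1$.
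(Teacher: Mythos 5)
Your proposal is correct and follows essentially the same route as the paper: sample $t+1$ nonzero elements (at most $t$ can have order prime to $r$, i.e.\ satisfy $\gamma^t=1$), raise a good one to the $t$-th power to get an element of $r$-power order $r^j$ with $j\ge 1$, and take its $r^{j-1}$-th power to obtain $\zeta_r$. Your additional care about $r\nmid t$, the existence of $t+1$ distinct candidates, and the running-time accounting only makes explicit what the paper leaves implicit.
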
  
\begin{proof}
The number of elements whose order is not a multiple of $r$ is $t$. So if we take $t+1$ elements in $\mathbb{F}_q$, this will give us an element $a$ that has order a multiple of $r$. Then, $a^{t}$ is an element with an $r$-power order. Let $\ord(a^t)=: r^s$, where $s \geq 1$. Finally, $a^{t\cdot r^{s-1}}$ is an element of order $r$ in $\mathbb{F}_q$.
\end{proof}

\subsection{Necessary condition for the irreducibility of a polynomial}
Our analysis provides a necessary condition for checking  irreducibility of a polynomial.

\begin{lemma}\label{irrd}
If $f\in \mathbb{F}_q[x]$ is irreducible and prime $r | \deg(f)$ with $\gcd(2,r)\cdot r\mid (q-1)$, then $ R(L_{f,r},f(x))$ is an $r$-th nonresidue in $\mathbb{F}_q$. 
\end{lemma}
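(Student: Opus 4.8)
The plan is to recognize that this lemma is, up to a harmless sign, a repackaging of Corollary~\ref{chi-r}, and to spell out the short chain of implications from Theorem~\ref{key}. Since $r$ is prime and $r\mid\deg(f)$, write $\deg(f)=d=rk$; we may assume $f$ is monic. The hypothesis $\gcd(2,r)\cdot r\mid(q-1)$ is precisely what Theorem~\ref{key} requires: it guarantees $\zeta_r\in\F_q$ (as $\F_q^*$ is cyclic of order divisible by $r$), and in the case $r=2$ it gives $4\mid(q-1)$.

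First I would apply Theorem~\ref{key} verbatim: with $L_{f,r}=\sum_{i=0}^{r-1}x^{q^{ki}}\zeta_r^i\bmod f$ one has $(L_{f,r})^{(q^d-1)/r}=\zeta_r^{-1}\ne 1$, so $L_{f,r}$ is an $r$-th nonresidue in $\F_{q^d}=\F_q[x]/\langle f\rangle$. Then Lemma~\ref{projection} transports this to the base field: $N_{\F_{q^d}/\F_q}(L_{f,r})$ is an $r$-th nonresidue in $\F_q$. Finally, since $f$ is monic and irreducible of degree $d$, Lemma~\ref{norm} identifies this norm with a resultant, $N_{\F_{q^d}/\F_q}(L_{f,r})=R(f,L_{f,r})$, where by Claim~\ref{res-mod} any polynomial representative of $L_{f,r}$ may be used.

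It remains only to pass from $R(f,L_{f,r})$ to $R(L_{f,r},f)$. These differ by the factor $(-1)^{d\cdot\deg(L_{f,r})}$, so the one point to check is that $-1$ is an $r$-th residue in $\F_q$ under our hypotheses, hence multiplying by it cannot change $r$-th residuosity: for odd $r$ this is immediate from $-1=(-1)^r$, and for $r=2$ it follows from $4\mid(q-1)$, which forces $-1$ to be a square in $\F_q$. Therefore $R(L_{f,r},f)$ is an $r$-th nonresidue in $\F_q$ (indeed $\chi_r(R(L_{f,r},f))=\zeta_r^{-1}$).

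The argument has essentially no obstacle. The only spot that needs a moment's care is the sign bookkeeping relating $R(f,L_{f,r})$ and $R(L_{f,r},f)$, together with the degenerate case where $L_{f,r}$ reduces to a constant modulo $f$ — but even then nothing goes wrong, since the value $\zeta_r^{-1}$ of the character on the norm is pinned down by Theorem~\ref{key} independently of which representative is chosen.
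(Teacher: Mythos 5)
Your proof is correct and follows essentially the same route as the paper, which simply notes that the lemma is immediate from Theorem~\ref{key} (via Lemma~\ref{projection}, Lemma~\ref{norm} and Corollary~\ref{chi-r}). Your extra bookkeeping on the sign when swapping the arguments of the resultant, and the observation that $-1$ is an $r$-th residue under the stated hypotheses, is a careful touch the paper leaves implicit but does not change the argument.
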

\begin{proof}
This follows directly from Thm.\ref{key}.
\end{proof}


Lem.\ref{irrd} for $r=2$ is used by von zur Gathen in his paper to prove properties about irreducible trinomials \cite[Cor.3]{Gathen01}. We hope that this generalized lemma gives conditions that can help construct additional polynomial families.

\section{Some conjectures} \label{sec-conjs}

\subsection{Finding polynomials satisfying Property \ref{condition}} \label{6.1}

A natural question that arises from our analysis is: How can one construct a polynomial satisfying Property \ref{condition}?
An approach can be to come up with a polynomial family  $\mathcal{F}$ such that at least one of the polynomial in $\mathcal{F}$ satisfies Property \ref{condition}. We leave the construction of such a polynomial family as an open question.

This question for $r=2$ will also be very interesting. For $r=2$, if we can construct a polynomial satisfying Property \ref{condition} then its discriminant will be a quadratic nonresidue by Stickelberger's lemma.   

A well studied polynomial family for such properties are trinomials. {\em Trinomials} are univariate polynomials with sparsity three:
$$\mathcal{T}_{(n,k,a,b)} \, =\, \{x^n+ax^k+b \,\mid\, n>k>0;  a,b \in \mathbb{Z}^* \} \,.$$
 An elegant property of trinomials is the closed form expression for their discriminant and, thus, it can be computed efficiently. (Even if the degree of the trinomial is exponential.)
 
\begin{theorem}[Swan \cite{Swan}] \label{trinomial}
Let $n>k>0$. Let $d=\gcd(n,k)$ and $n=n_1d, k=k_1d$ . Then,
\begin{align*}
\Delta(x^n+ax^k+b) \,=\, (-1)^{n(n-1)/2}b^{k-1}E^d \,,
\end{align*}
where $E= n^{n_1}b^{n_1-k_1}+(-1)^{n_1+1}(n-k)^{n_1-k_1}k^{k_1}a^{n_1}$ .
\end{theorem}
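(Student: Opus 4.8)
The plan is to prove Swan's formula (Theorem~\ref{trinomial}) by computing the resultant $R(f,f')$ for $f = x^n + ax^k + b$ and then invoking the identity $\Delta(f) = (-1)^{n(n-1)/2} a_n^{-1} R(f,f')$ from the preliminaries (with leading coefficient $a_n=1$). First I would write $f'(x) = nx^{n-1} + akx^{k-1} = x^{k-1}(nx^{n-k} + ak)$, and use multiplicativity of the resultant (Lemma~\ref{lem-res}(2)) to split $R(f, f') = R(f, x^{k-1})\cdot R(f, nx^{n-k}+ak)$. The first factor is easy: using the zero-based definition (Definition~\ref{resultant}), $R(f, x^{k-1}) = \prod_{\alpha\in\mc{Z}(f)} \alpha^{k-1} = \big(\prod_\alpha \alpha\big)^{k-1} = \big((-1)^n b\big)^{k-1}$, since the product of the roots of $f$ equals $(-1)^n b$.

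Next I would handle $R(f, g)$ where $g(x) = nx^{n-k} + ak$. Here the cleaner route is to switch the roles and write $R(f,g) = (-1)^{n\deg g}\, R(g,f) = (-1)^{n(n-k)} n^n \prod_{\beta\in\mc{Z}(g)} f(\beta)$, up to handling the leading coefficient $n$ of $g$ correctly. The roots $\beta$ of $g$ satisfy $\beta^{n-k} = -ak/n$. The key trick is to evaluate $f(\beta) = \beta^n + a\beta^k + b = \beta^k(\beta^{n-k} + a) + b = \beta^k\big(-ak/n + a\big) + b = \beta^k \cdot a(n-k)/n + b$. So $\prod_\beta f(\beta) = \prod_\beta\big(\tfrac{a(n-k)}{n}\beta^k + b\big)$, which is (up to a constant) a resultant of $g(x)$ with the linear-in-$x^k$ expression; I would reduce this to the resultant of $nx^{n-k}+ak$ with $\tfrac{a(n-k)}{n}x^k + b$ and evaluate it via the standard formula for the resultant of two binomials $ux^s + v$ and $u'x^t + v'$, which introduces the $d = \gcd(n-k, k) = \gcd(n,k)$ and produces the $d$-th power structure $E^d$. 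Collecting the powers of $n$, $b$, $a$, $(n-k)$, $k$ and the signs, and combining with the $\big((-1)^n b\big)^{k-1}$ factor and the overall $(-1)^{n(n-1)/2}$, should yield exactly $(-1)^{n(n-1)/2} b^{k-1} E^d$ with $E = n^{n_1} b^{n_1 - k_1} + (-1)^{n_1+1}(n-k)^{n_1-k_1} k^{k_1} a^{n_1}$.

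The main obstacle I anticipate is purely bookkeeping: carefully tracking the signs and the exact powers of $n$ through the several resultant manipulations (the swap $R(f,g) = (-1)^{\deg f \deg g} R(g,f)$, the leading-coefficient normalizations in Definition~\ref{resultant}, and the binomial-resultant formula), and verifying that the fractional powers of $n$ introduced by $\beta^{n-k} = -ak/n$ cancel to leave an integer polynomial expression. The $\gcd$ appears because the resultant of $x^{s} + c$ and $x^{t} + c'$ factors as a $\gcd(s,t)$-th power of a simpler quantity, which is the source of the exponent $d$ on $E$; making this reduction precise (e.g. by substituting $y = x^d$ after extracting common structure) is the one genuinely non-mechanical step. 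A cleaner alternative, which I would mention, is to derive the formula directly from $\mathrm{Res}(f,f') = (-1)^{n(n-1)/2}\prod_{\alpha}f'(\alpha)$ and the relation $f'(\alpha) = \alpha^{k-1}\big((n-k)a + n b \alpha^{-k}\big)/1$ obtained by eliminating $\alpha^n$ using $f(\alpha)=0$; this avoids swapping the arguments and may make the sign analysis more transparent.
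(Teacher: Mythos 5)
The paper itself offers no proof of this statement---it is quoted from Swan \cite{Swan} as a known closed-form for trinomial discriminants---so there is no internal argument to compare against; what you have written is essentially the classical resultant derivation that underlies Swan's formula, and its skeleton is sound. Writing $f'=x^{k-1}(nx^{n-k}+ka)$, using $\Delta(f)=(-1)^{n(n-1)/2}R(f,f')$ with leading coefficient $1$, splitting off $R(f,x^{k-1})=((-1)^n b)^{k-1}$ by multiplicativity, swapping arguments to evaluate $R(f,nx^{n-k}+ka)$ via $f(\beta)=\tfrac{a(n-k)}{n}\beta^k+b$ at the roots $\beta$ of $nx^{n-k}+ka$, and finally reducing to a resultant of two binomials whose $\gcd(n-k,k)=\gcd(n,k)=d$ is what produces the $d$-th power $E^d$: each of these steps is correct, and the $y=x^d$ substitution you mention is the standard way to make the last reduction precise.

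Two caveats you should address to turn the plan into a proof. First, your manipulations divide by $n$ and treat $\deg f'=n-1$, which breaks down in characteristic $p\mid n$; the clean fix is to prove the identity as a polynomial identity in $\mathbb{Z}[a,b]$ (equivalently over $\mathbb{Q}$, since for fixed $n,k$ both sides lie in $\mathbb{Z}[a,b]$) and then specialize and reduce modulo $p$, which is also how the formula gets used in Sec.~6.1 of the paper. Second, as you yourself flag, the entire content of the statement is in the bookkeeping: the $(-1)^{n(n-k)}$ from the swap, the powers of $n$ coming from the leading coefficients of $g$ and of $\tfrac{a(n-k)}{n}x^k+b$ (which must cancel exactly), and the sign inside the binomial resultant, which has to reassemble into the $(-1)^{n_1+1}$ appearing inside $E$. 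Until that accounting is carried out, what you have is a correct and standard plan rather than a complete proof, but there is no conceptual gap in it.
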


Trinomials are used to construct irreducible polynomials in \cite{Gathen01, Swan}.
Based on our experiments we give the following conjecture.
\begin{conjecture}\label{conj-trin}
The following polynomial family has at least one polynomial that satisfy property \ref{condition} for $r=2$, 
$$\mathcal{F} \,=\, \{\mathcal{T}_{(2i,k,a,b)} \,\mid\, 1\le i,k,a,b \le \log^2 p \,\} \,.$$
\end{conjecture}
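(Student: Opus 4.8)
\medskip
\noindent\textbf{Proof plan for Conjecture~\ref{conj-trin}.}
The idea is to reduce, through Stickelberger's Lemma and Swan's formula, to the assertion that a short explicit list of integers contains a quadratic nonresidue mod~$p$, and then to attack that assertion with character-sum estimates. First, note that if $f\in\F_p[x]$ is squarefree with $\ls{\Delta(f)}{p}=-1$, then by Stickelberger's Lemma (Eqn.~\ref{SL}) $\ls{\Delta(f)}{p}=(-1)^{n-s}$, where $n=\deg f$ and $s$ is the number of irreducible factors of $f$; since $n-s\equiv\#\{\text{even-degree irreducible factors of }f\}\pmod 2$, there must be an odd number of even-degree irreducible factors, and hence some even $d$ with an odd number of degree-$d$ irreducible factors. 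So $\ls{\Delta(f)}{p}=-1$ is \emph{sufficient} for $f$ to satisfy Property~\ref{condition} with $r=2$, and it is enough to exhibit one trinomial $x^{2i}+ax^k+b$ with $1\le i,k,a,b\le\log^2 p$, $2i>k$, $p\nmid\Delta$, whose discriminant is a quadratic nonresidue. (The tuples with $p\mid\Delta$ form a $\poly(\log p)$-size subset of the $\approx\log^8 p$ parameter tuples, and can be ignored.)

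Next, specialize to $k=1$ and make the discriminant explicit by Swan's formula (Thm.~\ref{trinomial}) with $n=2i$, $d=1$, $n_1=2i$, $k_1=1$:
\[
\Delta\bigl(x^{2i}+ax+b\bigr)\,=\,(-1)^{i}\Bigl((2i)^{2i}\,b^{2i-1}-(2i-1)^{2i-1}\,a^{2i}\Bigr)\qquad(p>2i).
\]
Writing $E_i(a,b)$ for the parenthesized expression, the conjecture now follows once one shows that for every large prime~$p$ there exist $i,a,b\in[1,\log^2 p]$ with $\ls{(-1)^i E_i(a,b)}{p}=-1$. Already the slice $i=1$, where $E_1(a,b)=4b-a^2$ and the target is $\ls{a^2-4b}{p}=-1$, carries the essential difficulty: it asks that some quadratic $x^2+ax+b$ with $a,b\le\log^2 p$ be irreducible over $\F_p$.

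The remaining step is the hard one, and I expect it to be the real obstacle: showing that the incomplete Jacobi-symbol sum $\sum_{a,b\le\log^2 p}\ls{E_i(a,b)}{p}$ (for a suitably chosen~$i$) is, in absolute value, smaller than the number of terms, or else assuming \emph{every} discriminant in the family is a quadratic residue and running a Vinogradov-style sieve---using multiplicativity of $\ls{\cdot}{p}$ and factorizations such as $(a-2)(a+2)=a^2-4$---to force a small quadratic nonresidue to exist. Both routes run into the fundamental barrier that these are character sums over intervals or boxes of length only $\poly(\log p)$, far below the Burgess range $\approx p^{1/4}$, while the sieve unconditionally yields only a least-nonresidue bound of shape $p^{\Omega(1)}$ \cite{Burg}---exponentially larger than the $\poly(\log p)$ the conjecture demands. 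A complete proof therefore appears to require GRH (giving least quadratic nonresidue $O(\log^2 p)$) or the strictly weaker hypothesis of Section~\ref{6.2}; without such input, the conjecture must rest on the numerical evidence. Minor points to clean up along the way: the non-squarefree tuples noted above, the extra character factors $\ls{-1}{p}$ and $\ls{b}{p}^{\,k-1}$ that appear if one works with $k>1$, and degenerate small~$p$.
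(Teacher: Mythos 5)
You should first note that the statement you were asked to prove is, in the paper, explicitly a \emph{conjecture}: the authors give no proof (they write that they leave its proof or refutation open, supported only by experiments), so there is no paper argument to compare yours against. Judged on its own terms, your proposal is an honest and largely correct \emph{reduction}, not a proof. The two preparatory steps are sound: (i) for squarefree $f$, $\ls{\Delta(f)}{p}=-1$ forces an odd number of even-degree irreducible factors and hence Property~\ref{condition} with $r=2$ (sufficiency, as you correctly say --- not equivalence); (ii) Swan's formula with $k=1$, $n=2i$ indeed gives $\Delta(x^{2i}+ax+b)=(-1)^i\bigl((2i)^{2i}b^{2i-1}-(2i-1)^{2i-1}a^{2i}\bigr)$, and the slice $i=1$ is exactly the question of whether some monic quadratic $x^2+ax+b$ with $a,b\le\log^2 p$ is irreducible over $\F_p$.

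The genuine gap is the final step, and you have identified it yourself: after the reduction, the conjecture is equivalent to a least-nonresidue-type statement in a box of side $\poly(\log p)$, and no unconditional technique reaches that range --- Burgess-type character-sum bounds are vacuous below length about $p^{1/4}$, and sieve arguments give only $p^{\Omega(1)}$. Your observation that GRH (via Ankeny, least quadratic nonresidue $O(\log^2 p)$, e.g.\ fed through the $i=1$, $a=2$ slice where $\Delta=4(1-b)$) or the Weak GRH of Section~\ref{6.2} would close the argument is correct up to the constant implicit in Ankeny's bound versus the hard cutoff $\log^2 p$ in the definition of $\mathcal{F}$, but a conditional argument does not establish the conjecture. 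So the proposal should be read as a useful clarification of \emph{why} the conjecture is plausible and exactly where its difficulty sits (it is at least as hard as exhibiting a small irreducible quadratic with polylog coefficients), rather than as a proof; the conjecture remains open, as the paper states.
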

 
We leave the proof, or a refutation, of this conjecture as an open question. 

\subsection{Weaker Generalized Riemann Hypothesis} \label{6.2} 

In 1952, Ankeny \cite{Ankeny} proved that if the Generalized Riemann Hypothesis is true then the least quadratic nonresidue in $\mathbb{F}_p$ is $O(\log^2 p).$
The Generalized Riemann hypothesis(GRH) says that all the non-trivial roots $\rho$ of the Dirichlet L function   are on real line $z =\frac{1}{2}$, but what if  we consider a weaker form  of it? Instead of saying that all the nontrivial roots lie on Re$(\rho)=\frac{1}{2}$, we ``merely'' conjecture that all the nontrivial  roots lie in a wider strip $[\frac{1}{2}- \epsilon, \frac{1}{2} +\epsilon]$, for a constant $\epsilon$. 

\begin{conjecture}[Weak GRH]\label{conj-wgrh}
Let $\chi$ be a Dirichlet character, i.e.~$\chi: \mathbb{F}_p^* \longrightarrow \mathbb{C}^*$.
There exists a constant $\frac{1}{2} > \epsilon \geq 0$ such that the Dirichlet L function $L(s,\chi)=\sum \frac{\chi(n)}{n^s}$ have all its nontrivial roots in the interval $\frac{1}{2}-\epsilon < \text{Re}(s) < \frac{1}{2} + \epsilon$. 
\end{conjecture}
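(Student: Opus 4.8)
The plan is necessarily a research program rather than a proof, since Conjecture~\ref{conj-wgrh} is a form of quasi-GRH and, as stated, is open. The first step I would take is to reduce the two-sided strip to a one-sided zero-free region. The completed $L$-function satisfies a functional equation relating $\Lambda(s,\chi)$ to $\Lambda(1-s,\bar\chi)$, so the nontrivial zeros of $L(\cdot,\chi)$ and of $L(\cdot,\bar\chi)$ are interchanged by $\rho\mapsto 1-\rho$; since the conjectured $\epsilon$ must be uniform over all characters and $\bar\chi$ ranges over the same family as $\chi$, Conjecture~\ref{conj-wgrh} is equivalent to: there is an absolute constant $\tfrac12>\epsilon\ge0$ with $L(s,\chi)\ne0$ whenever $\mathrm{Re}(s)\ge\tfrac12+\epsilon$, uniformly in $\chi$. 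So the real target is a \emph{fixed-width} zero-free strip to the right of the critical line (the case $\epsilon=0$ being exactly GRH).

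Next I would assemble the standard machinery: (i) the classical zero-free region $L(s,\chi)\ne0$ for $\mathrm{Re}(s)>1-c/\log(q(2+|t|))$, with at most one exceptional real zero for real $\chi$, together with its Vinogradov--Korobov sharpening; and (ii) zero-density estimates, via the large sieve, of the shape $\sum_{\chi\bmod q}N(\sigma,T,\chi)\ll(qT)^{A(1-\sigma)}(\log qT)^{B}$, where $N(\sigma,T,\chi)$ counts the zeros $\rho$ with $\mathrm{Re}(\rho)\ge\sigma$ and $|\mathrm{Im}(\rho)|\le T$. The hope is to feed a density estimate into a repulsion (Deuring--Heilbronn type) argument, amplified by a suitable mollifier of $L$ near the line $\mathrm{Re}(s)=\tfrac12+\epsilon$, so as to drive the zero count in the strip $\mathrm{Re}(s)\ge\tfrac12+\epsilon$ below $1$, hence to $0$, for a fixed $\epsilon$.

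A complementary avenue --- and one suggested by the very connection exploited in this paper --- is to argue through character sums: a strong enough unconditional bound on the least $r$-th nonresidue, or more generally on $\sum_{n\le x}\chi(n)$ for $x$ a small power of $q$, translates via the explicit formula into a restriction on where the zeros of $L(s,\chi)$ can sit. One could try to push the Burgess / P\'olya--Vinogradov circle of ideas, or the arithmetic-geometry inputs behind the algorithms of Schoof and Pila used above, in this direction. The obstacle here is quantitative: the bounds presently available (such as the $\tilde{O}(p^{1/4\sqrt{e}})$ bound underlying the best deterministic nonresidue search) are far too weak to localise any zero, so this route would itself demand a breakthrough on character sums.

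The main obstacle is precisely the point where every branch stalls: no known unconditional technique produces a zero-free region of \emph{fixed} width --- the classical, Vinogradov--Korobov, and density-based arguments all yield regions that pinch down to the line $\mathrm{Re}(s)=1$ as the conductor and height grow, and density estimates by themselves permit a sparse but nonempty set of zeros far from the critical line. In particular, Conjecture~\ref{conj-wgrh} would, as a special case, rule out Siegel zeros, which is itself a notorious open problem, and no reduction of it to anything strictly weaker than GRH is known. So an honest appraisal is that carrying out the plan requires a genuinely new idea that breaks the $1-o(1)$ barrier in zero-free regions; the steps above are the natural scaffolding on which such an idea would have to hang.
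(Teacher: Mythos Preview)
Your assessment is correct in the most important respect: Conjecture~\ref{conj-wgrh} is not proved in the paper at all. It is introduced as an explicit open conjecture, used only as a \emph{hypothesis} for Lemma~\ref{WGRH} and Theorem~\ref{thm-npr}, and in the conclusion the authors state plainly that they ``leave the concrete conjectures Conj.~\ref{conj-trin} \& \ref{conj-wgrh} open.'' So there is no paper proof to compare against, and your decision to present a research program rather than a proof is the only honest option.

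Given that, your outline is sensible and well-informed: the reduction via the functional equation to a one-sided fixed-width zero-free region is accurate, and you correctly identify the central obstruction --- all known unconditional zero-free regions shrink to zero width as conductor and height grow, and even the $\epsilon>0$ case would already eliminate Siegel zeros. The paper does not engage with any of this analytic machinery; its sole use of the conjecture is to feed the bound $|x^\rho|\le x^{1/2+\epsilon}$ into the explicit formulas of Lemmas~\ref{fac1}--\ref{fac2}. So your discussion goes well beyond what the paper attempts, but that is appropriate here since the paper attempts nothing.
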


We will now use some known facts from Analytic number theory, for detailed proofs of these facts see \cite[Chap.7]{RM}. Let $\Lambda$ be the Mangoldt function and $\zeta(s)$ be the Riemann zeta function. 

\begin{lemma}[Bounds for $\psi(x,\chi)$]\label{fac1}
Let $\psi(x,\chi)=\sum_{i \leq x}\Lambda(i)\chi(i)$ and $\chi$ be a primitive Dirichlet character of $\mathbb{F}_p^*$ , then 
$$\psi(x,\chi) \,=\, -\sum_{|\gamma| < \sqrt{x}}\frac{x^{\rho}}{\rho} \,+\, O\big(\log^2px \big) \,,$$
where  $\rho=\sigma + i\gamma$ are the nontrivial roots of the Dirichlet  L function $L(s,\chi)$. Also, $\sum_{|\gamma| < \sqrt{x}}\frac{1}{|\rho|} = O(\log^2 px)$.
\end{lemma}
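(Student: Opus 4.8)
\textbf{Proof proposal for Lemma \ref{fac1}.} The plan is to derive the explicit formula for $\psi(x,\chi)$ from the analytic properties of $L(s,\chi)$ via a contour-integration argument, and then truncate it to the range $|\gamma| < \sqrt{x}$. First I would recall the Perron-type identity expressing $\psi(x,\chi)$ as a contour integral of $-\frac{L'}{L}(s,\chi)\,\frac{x^s}{s}$ over a vertical line $\text{Re}(s) = c$ with $c > 1$; this is standard because $-\frac{L'}{L}(s,\chi) = \sum_{n\ge1}\Lambda(n)\chi(n) n^{-s}$ for $\text{Re}(s)>1$. Then I would shift the contour to the left, past the critical strip, picking up residues. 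Since $\chi$ is a primitive character of $\F_p^*$ and $p$ is prime (so $\chi$ is in particular nonprincipal), $L(s,\chi)$ is entire, so the only poles of the integrand come from the trivial and nontrivial zeros of $L(s,\chi)$ (giving terms $x^\rho/\rho$) and from the pole of $\frac{x^s}{s}$ at $s=0$. Summing the residues at the nontrivial zeros produces the main term $-\sum_\rho \frac{x^\rho}{\rho}$, while the trivial zeros, the $s=0$ term, and the conductor-dependent constants get absorbed into the error; keeping track of the dependence on $p$ through the conductor is what produces the $\log^2 p$ factor (rather than $\log p$) in the error estimate.

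Next I would handle the truncation. The full sum $\sum_\rho \frac{x^\rho}{\rho}$ over \emph{all} nontrivial zeros is only conditionally convergent, so one uses the classical device of working first with a smoothed or truncated Perron formula at height $T$, which introduces an error of order roughly $\frac{x\log^2(px)}{T}$ plus contributions from zeros with $|\gamma| > T$; choosing $T = \sqrt{x}$ balances these against the target error $O(\log^2(px))$ after the relevant zero-counting and zero-density bounds are applied. The estimate $\sum_{|\gamma| < \sqrt{x}} \frac{1}{|\rho|} = O(\log^2(px))$ is really the technical heart here: it follows from the fact that the number of nontrivial zeros of $L(s,\chi)$ with $|\gamma| \le T$ is $O(T\log(pT))$ (Riemann--von Mangoldt for $L$-functions, with the conductor $p$ appearing inside the log), so a dyadic decomposition of the range $|\gamma| \in [1, \sqrt x]$ contributes $\sum_{j} \frac{1}{2^j}\cdot O(2^j \log(p\cdot 2^j)) = O(\log^2(px))$, and the finitely many zeros with $|\gamma| \le 1$ are handled separately using a lower bound for $|\rho|$ coming from a zero-free region near $s=1$ (or just the absence of a real zero very close to $1$ for the relevant characters).

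I expect the main obstacle to be the careful bookkeeping of the $p$-dependence throughout: every invocation of the zero-counting function, the zero-free region, and the bound $\frac{L'}{L}(s,\chi) = O(\log(p(|t|+2)))$ away from zeros carries a factor of $\log p$, and one must verify that these combine to give exactly the stated $O(\log^2(px))$ and no worse. Since the lemma is quoted from \cite[Chap.~7]{RM}, I would in practice cite that reference for the fully rigorous version and present only the skeleton above: Perron formula, contour shift collecting $-\sum \frac{x^\rho}{\rho}$, truncation at $T=\sqrt x$, and the dyadic zero-density estimate for $\sum \frac{1}{|\rho|}$. The remaining steps are routine analytic-number-theory estimates that do not require any idea special to the finite-field setting.
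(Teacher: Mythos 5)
The paper contains no proof of this lemma at all: it is quoted as a known fact from analytic number theory, with the reader referred to \cite[Chap.~7]{RM}, so there is no internal argument to compare against. Your sketch --- Perron's formula for $-\frac{L'}{L}(s,\chi)\frac{x^s}{s}$, a contour shift collecting $-\sum_\rho x^\rho/\rho$ (no pole at $s=1$ since $\chi$ is nonprincipal, consistent with the absence of a main term $x$ here, in contrast to Lem.~\ref{fac2}), truncation at height $T=\sqrt{x}$, and a dyadic zero-counting bound for $\sum 1/|\rho|$ --- is precisely the standard derivation that the cited reference carries out, so in spirit you are following the same route as the paper, which simply outsources it.

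Two quantitative caveats, though. First, truncation at $T=\sqrt{x}$ cannot deliver an error as small as $O(\log^2 px)$: the truncated explicit formula carries an error $O\big(\tfrac{x}{T}\log^2(px)\big)$ plus lower-order terms, which at $T=\sqrt{x}$ is $O(\sqrt{x}\,\log^2 (px))$, so your claim that this choice ``balances these against the target error $O(\log^2(px))$'' is not attainable by this (or any standard) argument. This mirrors an inaccuracy in the lemma as stated in the paper itself --- compare Lem.~\ref{fac2}, where the analogous error is correctly recorded as $O(\sqrt{x}\log^2 x)$ --- and it is harmless downstream, since Lem.~\ref{WGRH} and Thm.~\ref{thm-npr} only need a bound dominated by $x^{\frac12+\epsilon}\log^2(px)$; but if you write the proof out you should state the correct error term rather than the claimed one. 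Second, in the estimate $\sum_{|\gamma|<\sqrt{x}}1/|\rho|=O(\log^2(px))$, your dyadic count settles the range $|\gamma|\ge 1$, but for the low-lying zeros the danger is $|\rho|$ small, i.e.\ zeros near $s=0$, which by the functional equation are reflections of zeros near $s=1$; unconditionally a possible exceptional (Siegel) zero has its reflection within $\asymp p^{-1/2}$ of the origin, so the clean $O(\log^2(px))$ bound requires either excluding that zero or invoking the very Weak GRH of Conj.~\ref{conj-wgrh} (under which $\beta\ge\frac12-\epsilon$ bounds $|\rho|$ below by a constant). Neither point affects how the lemma is used in Sec.~\ref{6.2}, but both belong in a rigorous write-up; otherwise, citing \cite[Chap.~7]{RM} as the paper does is the honest course.
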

 
\begin{lemma}[Bounds for $\psi(x)$]\label{fac2}
Let $\psi(x)=\sum_{i \leq x}\Lambda(i)$, then
$$ \psi(x) \,=\, x-\sum_{|\gamma| < \sqrt{x}}\frac{x^{\rho}}{\rho} \,+\, O \big(\sqrt{x}\log^2x\big) \,,$$  
where $\rho=\sigma + i\gamma$ are the nontrivial roots of the Riemann zeta function $\zeta(s)$. Also, $\sum_{|\gamma| < \sqrt{x}}\frac{1}{|\rho|} = O(\log^2 x)$.
\end{lemma}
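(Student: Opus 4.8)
The plan is to recognize the claimed identity as the classical truncated explicit (Riemann--von Mangoldt) formula for the Chebyshev function $\psi(x)=\sum_{n\le x}\Lambda(n)$, specialized to the truncation height $T=\sqrt{x}$. The starting point is the effective Perron formula: for $c=1+1/\log x$ and any $T\ge 2$,
\[\psi(x) \,=\, \frac{1}{2\pi i}\int_{c-iT}^{c+iT}\Big(-\frac{\zeta'(s)}{\zeta(s)}\Big)\frac{x^s}{s}\,ds \,+\, O\Big(\frac{x\log^2 x}{T}+\log x\Big),\]
where the error term comes from estimating the tails of the Dirichlet series $\sum_n\Lambda(n)(x/n)^s$ and the contribution of integers $n$ very close to $x$.

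Next I would shift the line of integration to $\mathrm{Re}(s)=-U$, letting $U\to\infty$ through half-integers so as to dodge the trivial zeros $s=-2,-4,\dots$, and collect the residues of $-\frac{\zeta'}{\zeta}(s)\,\frac{x^s}{s}$ inside the resulting rectangle. The simple pole of $-\zeta'/\zeta$ at $s=1$ contributes $x$; each nontrivial zero $\rho$ with $|\gamma|<T$ contributes $-x^{\rho}/\rho$; the point $s=0$ contributes the constant $-\zeta'(0)/\zeta(0)=\log 2\pi$; and the trivial zeros contribute $\sum_{k\ge 1}x^{-2k}/(2k)=-\tfrac12\log(1-x^{-2})=O(1)$. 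Combining this with the Perron error and then setting $T=\sqrt{x}$ converts $O(x\log^2 x/T+\log x)$ into $O(\sqrt{x}\log^2 x)$, which is precisely the asserted formula.

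The technical heart --- and the step I expect to be the main obstacle --- is bounding the contributions of the two horizontal segments $\mathrm{Im}(s)=\pm T$ and of the far-left vertical segment. For the horizontal segments one must first nudge the height $T$ (within $O(1)$ of $\sqrt{x}$) so that $|\gamma-T|\gg 1/\log T$ for every ordinate $\gamma$ of a zero; then, using the partial-fraction expansion $\frac{\zeta'}{\zeta}(s)=\sum_{|\gamma-\mathrm{Im}(s)|\le 1}\frac{1}{s-\rho}+O(\log|s|)$ together with the zero-counting bound $N(T+1)-N(T)=O(\log T)$, one obtains $|\zeta'/\zeta(s)|=O(\log^2 T)$ on those segments, so their integral is $O(x\log^2 x/T)$. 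The vertical segment at $\mathrm{Re}(s)=-U$ is handled by the functional equation, which bounds $\zeta'/\zeta$ by a logarithm in the left half-plane while $x^s$ decays, so this segment vanishes as $U\to\infty$. The same zero-counting input is what legitimizes truncating the sum over zeros.

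Finally, for the secondary estimate $\sum_{|\gamma|<\sqrt{x}}\frac{1}{|\rho|}=O(\log^2 x)$: since $|\rho|=\sqrt{\sigma^2+\gamma^2}\ge|\gamma|$, there are no zeros on the real axis, and the nontrivial zeros come in conjugate pairs, it suffices to bound $2\sum_{0<\gamma<\sqrt{x}}\frac{1}{\gamma}$. Feeding the Riemann--von Mangoldt count $N(T)=\frac{T}{2\pi}\log\frac{T}{2\pi}-\frac{T}{2\pi}+O(\log T)$ into Abel summation gives $\sum_{0<\gamma<T}\frac{1}{\gamma}=O(\log^2 T)$, and substituting $T=\sqrt{x}$ finishes the proof.
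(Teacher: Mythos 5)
Your argument is the standard truncated explicit-formula proof (Perron's formula, contour shift collecting the pole at $s=1$, the nontrivial zeros with $|\gamma|<T$, and the $O(1)$ terms at $s=0$ and the trivial zeros, with $T=\sqrt{x}$, plus Abel summation against $N(T)$ for $\sum 1/|\rho|$), which is exactly the proof in the reference the paper cites for this lemma; the paper itself offers no proof, so there is nothing divergent to compare. The only blemish is the harmless sign slip $-\zeta'(0)/\zeta(0)=-\log 2\pi$ (not $+\log 2\pi$), which is absorbed into the error term anyway.
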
  
  
We will now prove bounds on $\psi(x)$ and $\psi(x,\chi)$ assuming Weak GRH.
\begin{lemma}[New bounds]\label{WGRH} 
Assuming Weak GRH, 
\begin{enumerate}
\item $\psi(x,\chi)=O(x^{\frac{1}{2}+\epsilon}\log^2px)$, and
\item $\psi(x)=x+ O(x^{\frac{1}{2}+\epsilon}\log^2x)$.
\end{enumerate} 
\end{lemma}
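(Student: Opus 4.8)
The plan is to substitute the Weak GRH hypothesis (Conjecture~\ref{conj-wgrh}) directly into the explicit formulas recalled in Lemmas~\ref{fac1} and~\ref{fac2} and then bound the resulting sum over zeros term by term. Concretely, for part (1), I start from
\[ \psi(x,\chi) \,=\, -\sum_{|\gamma| < \sqrt{x}}\frac{x^{\rho}}{\rho} \,+\, O\big(\log^2 px\big) \,, \]
and estimate $|x^\rho| = x^{\sigma}$. Under Weak GRH every nontrivial root $\rho = \sigma + i\gamma$ satisfies $\sigma \le \tfrac12 + \epsilon$, so $|x^\rho| \le x^{1/2+\epsilon}$ (here $x \ge 1$, so the larger real part gives the larger modulus). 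Pulling this uniform bound out of the sum gives
\[ \Big| \sum_{|\gamma| < \sqrt{x}}\frac{x^{\rho}}{\rho} \Big| \,\le\, x^{1/2+\epsilon} \sum_{|\gamma| < \sqrt{x}}\frac{1}{|\rho|} \,=\, x^{1/2+\epsilon}\cdot O(\log^2 px) \,, \]
using the second assertion of Lemma~\ref{fac1} for the zero-sum $\sum 1/|\rho| = O(\log^2 px)$. The error term $O(\log^2 px)$ from the explicit formula is absorbed since $x^{1/2+\epsilon} \ge 1$, yielding $\psi(x,\chi) = O(x^{1/2+\epsilon}\log^2 px)$.

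\textbf{The zeta case.} For part (2) the argument is identical in structure, starting from Lemma~\ref{fac2}:
\[ \psi(x) \,=\, x-\sum_{|\gamma| < \sqrt{x}}\frac{x^{\rho}}{\rho} \,+\, O\big(\sqrt{x}\log^2 x\big) \,. \]
Again bound $|x^\rho| \le x^{1/2+\epsilon}$ under Weak GRH (noting the trivial zeros are already excluded, and $\zeta(s)$ has no zero on $\mathrm{Re}(s)=1$, so the conjectured strip legitimately contains all nontrivial zeros), factor it out of the sum, and apply $\sum_{|\gamma|<\sqrt x} 1/|\rho| = O(\log^2 x)$ from Lemma~\ref{fac2}. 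This gives $\sum_{|\gamma|<\sqrt x} x^\rho/\rho = O(x^{1/2+\epsilon}\log^2 x)$, and the leftover error $O(\sqrt x \log^2 x)$ is dominated by $O(x^{1/2+\epsilon}\log^2 x)$ since $\epsilon \ge 0$. Hence $\psi(x) = x + O(x^{1/2+\epsilon}\log^2 x)$.

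\textbf{Main obstacle.} The calculation itself is routine once the explicit formulas of Lemmas~\ref{fac1} and~\ref{fac2} are in hand; the only real subtlety is the interplay between the truncation $|\gamma| < \sqrt{x}$ and the convergence of $\sum 1/|\rho|$. The point is that we do \emph{not} need the full (conditionally convergent) sum over all zeros: the truncated sum over $|\gamma| < \sqrt x$ is exactly what Lemmas~\ref{fac1} and~\ref{fac2} bound by $O(\log^2 px)$ and $O(\log^2 x)$ respectively, so no extra work on zero-density or on tails beyond height $\sqrt x$ is needed. One should also remark that the constant in the $O(\cdot)$ depends on $\epsilon$ (which is itself an absolute constant once Weak GRH is assumed), and that for $\epsilon = 0$ this recovers the classical GRH-conditional bounds, as a sanity check. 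I would present the $\chi$ case in full and note that the $\zeta$ case is word-for-word analogous.
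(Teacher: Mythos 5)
Your proposal is correct and follows essentially the same route as the paper: bound $|x^\rho|\le x^{1/2+\epsilon}$ under Weak GRH, factor it out of the truncated zero-sum, and apply the $\sum 1/|\rho|$ estimates from Lemmas~\ref{fac1} and~\ref{fac2}. Your remarks about absorbing the residual error terms and the truncation at $|\gamma|<\sqrt{x}$ are accurate but add nothing beyond the paper's argument.
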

\begin{proof}
\begin{enumerate}
\item Using the notation in Lem.\ref{fac1},
\begin{align*}
\Big|\sum_{\gamma < \sqrt{x}}\frac{x^{\rho}}{\rho}\Big| &\,\leq\, (\max_{\rho}|x^{\rho}|) \cdot\Big| \sum_{\rho}\frac{1}{\rho}\Big| \\
& \,\leq\,  x^{\frac{1}{2}+\epsilon} \cdot\Big| \sum_{\gamma < \sqrt{x}}\frac{1}{|\rho|}\Big| \\
& \,=\, O(x^{\frac{1}{2}+\epsilon}\log^2px) \,.
\end{align*}
\noindent
Since,  $\psi(x,\chi) = -\sum_{|\gamma| < \sqrt{x}}\frac{x^{\rho}}{\rho}+O\big(\log^2px \big)$, we get that $\psi(x,\chi)=O(x^{\frac{1}{2}+\epsilon}\log^2px)$.
\\
 
\item Using the notation in Lem.\ref{fac2},
\begin{align*}
\Big|\sum_{\gamma < \sqrt{x}}\frac{x^{\rho}}{\rho}\Big| & \,\leq\, (\max_{\rho}|x^{\rho}|) \cdot\Big| \sum_{\rho}\frac{1}{\rho}\Big| \\
& \,\leq\,  x^{\frac{1}{2}+\epsilon} \cdot\Big| \sum_{\gamma < \sqrt{x}}\frac{1}{|\rho|}\Big| \\
& \,=\, O(x^{\frac{1}{2}+\epsilon}\log^2x) \,.
\end{align*}
\noindent
Since, $ \psi(x)=x-\sum_{|\gamma| < \sqrt{x}}\frac{x^{\rho}}{\rho}+ O \big(\sqrt{x}\log^2x\big)$, we get that $\psi(x)=x+ O(x^{\frac{1}{2}+\epsilon}\log^2 x)$. 
\end{enumerate}
\end{proof}

Using this lemma we will bound the least $r$-th nonresidue in $\mathbb{F}_p$. 

\begin{theorem}\label{thm-npr}
Let n(p,r) denote the least $r$-th nonresidue in $\mathbb{F}_p^*$. Then, assuming the Weak GRH,  
$$n(p,r) \,=\, O(\log^{\frac{4}{1-2\epsilon}} p) \,.$$
\end{theorem}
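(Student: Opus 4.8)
The plan is to mimic the classical GRH-conditional argument of Ankeny that bounds the least quadratic nonresidue, but with the weaker input from Lemma~\ref{WGRH} in place of the sharp GRH estimate. Fix a prime $r \mid p-1$ and let $\chi$ be a Dirichlet character of $\mathbb{F}_p^*$ of order exactly $r$ (so that $\chi(n) \ne 1$ precisely when $n$ is an $r$-th nonresidue). Suppose, for contradiction, that every prime $\ell \le y$ is an $r$-th residue, i.e. $\chi(\ell) = 1$ for all primes $\ell \le y$; our aim is to derive an upper bound on $y$, which then bounds $n(p,r)$ since the least $r$-th nonresidue is prime.

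First I would compare $\psi(x,\chi)$ with $\psi(x)$. Write $\psi(x,\chi) = \sum_{n \le x} \Lambda(n)\chi(n)$. The support of $\Lambda$ is on prime powers $\ell^a$; if all primes $\ell \le y$ satisfy $\chi(\ell)=1$, then for $x \le y$ every term with $\ell \le x$ contributes $\Lambda(\ell^a)\chi(\ell)^a = \Lambda(\ell^a)$, so $\psi(x,\chi) = \psi(x)$ for all $x \le y$. Now invoke Lemma~\ref{WGRH}: on one hand $\psi(x,\chi) = O(x^{1/2+\epsilon}\log^2(px))$, and on the other hand $\psi(x) = x + O(x^{1/2+\epsilon}\log^2 x)$. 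Equating these at $x = y$ gives $y = O(y^{1/2+\epsilon}\log^2(py))$, hence $y^{1/2-\epsilon} = O(\log^2(py))$. Since $y < p$ we may absorb $\log^2(py) = O(\log^2 p)$ (a routine adjustment: if $y \ge p$ there is nothing to prove, as $n(p,r) < p$ trivially), obtaining $y^{1/2-\epsilon} = O(\log^2 p)$, i.e. $y = O(\log^{4/(1-2\epsilon)} p)$. Taking $y$ to be the least $r$-th nonresidue, this is a contradiction unless $n(p,r) = O(\log^{4/(1-2\epsilon)} p)$, which is exactly the claimed bound.

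The one delicate point, and the step I expect to be the main obstacle, is making the comparison $\psi(x,\chi) = \psi(x)$ genuinely useful: one must be careful that the error term $O(x^{1/2+\epsilon}\log^2 x)$ in the formula for $\psi(x)$ does not already dominate the main term $x$ for the range of $x$ we use, which forces $x$ (hence $y$) to be taken large enough that $x^{1/2-\epsilon} \gg \log^2 x$ — this is precisely the threshold that produces the exponent $4/(1-2\epsilon)$, so the bound is self-consistent but the constants must be tracked with a little care. A secondary subtlety is that $\chi$ of order $r$ need not be primitive a priori, but since $p$ is prime, every non-principal character mod $p$ is primitive, so Lemma~\ref{fac1} applies directly. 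Finally, one should note $r$ enters the bound only through the requirement $r \mid p-1$ (so that such a $\chi$ exists) and not through the exponent, matching the statement; the quantity $n(p,r)$ is then $O(\log^{4/(1-2\epsilon)} p)$ with an implied constant independent of $r$.
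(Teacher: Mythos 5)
Your proposal is correct and takes essentially the same route as the paper: both arguments compare $\psi(x,\chi)$ for a nontrivial character detecting $r$-th residues against $\psi(x)$, apply the Weak-GRH bounds of Lem.~\ref{WGRH}, and solve $y^{1/2-\epsilon}=O(\log^2 p)$ to get the exponent $4/(1-2\epsilon)$. The paper packages this as the vanishing of $S(M)=\psi(M,\chi_0)-\psi(M,\chi_r)$ rather than your contradiction via the least (prime) nonresidue, but the substance is identical, and your side remarks (primitivity of nonprincipal characters mod a prime, $n(p,r)<p$) only make explicit what the paper leaves implicit.
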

\begin{proof}
Let $\chi_r(a):= a^{\frac{p-1}{r}} \bmod p$, and $\chi_o$ be the trivial character i.e., $\chi_0(a)=1 , \forall a \in \mb{F}_p^*$. Consider 
$$S(M) \,:=\, \sum_{1\leq a\leq M}\chi_o (a)\Lambda(a)-\sum_{1\leq a\leq M}\chi_r(a)\Lambda(a) \,.$$
Note that, S(M) is zero iff there is no $r$-th nonresidue in the initial interval $[M]$.

We have,
\begin{align*}
S(M)& \,=\, \psi(M,\chi_o)-\psi(M,\chi_r) \\
& \,=\, M +O(M^{0.5 + \epsilon}\log^2 pM) && [\mbox{ Using Lem.\ref{WGRH} }]
\end{align*}

We are interested in finding the maximum $M_0$ such that  $S(M_0)=0$. The above estimate implies that $M_0 = O(M_0^{0.5 +\epsilon}\log^2 pM_0)$. 

Therefore,  $n(p,r)= O(\log^{\frac{4}{1-2\epsilon}}p)$.
\end{proof}

This elementary analysis, assuming Weak GRH, has remarkable  consequences. Ankeny's result has been used in derandomizing many computational problems under the assumption of GRH. Some of them are primality testing \cite[Chap.9]{BS}, $r$-th root finding \cite{AMM}, constructing irreducible polynomials over finite fields
\cite{AL86} and cases of polynomial factoring over finite fields \cite{BGL, AL86}. (Also, see \cite{AIKS14, IKS09} and the references therein.) Our result implies that, for derandomizing these problems, proving the Weak GRH suffices. 

\section{Conclusion} \label{sec-concl}

We give a significant generalization of Stickelberger Lemma (Eqn.\ref{SL});  we can construct an $r$-th nonresidue in $\mathbb{F}_q$ given $\zeta_r \in \mathbb{F}_q $ and a polynomial $f$ satisfying Stickelberger property \ref{condition}. Using this, we also gave an algorithm to find $r$-th roots in $\F_{q^m}$ if $r=O(1)$ and $r|\gcd(m,p-1)$. An interesting open question here is whether one can weaken the Stickelberger property (eg.~remove the nondivisibility by $r$ condition?).

Our result along with some known results on finding $\zeta_r \in \mathbb{F}_q$ gives us some interesting applications. It seems that finding $\zeta_r \in  \mathbb{F}_q$ is an inherent requirement in our analysis. We leave removing the requirement of $\zeta_r $ from our algorithm as an open question. This we have been able to achieve, if the goal is only to construct a degree $r$ irreducible (given $f$) instead of an $r$-th nonresidue.

We also leave the concrete conjectures Conj.\ref{conj-trin} \& \ref{conj-wgrh} open.

\section*{Acknowledgements}
Part of research was accomplished while the first two authors were visiting
CQT, NUS.
V.B.~would also like to thank CSE and  IITK for their generous hospitality. N.S.~thanks the funding support from DST (DST/SJF/MSA-01/2013-14). R.M.~would like to thank DST Inspire grant.

\bibliographystyle{abbrv}
\bibliography{qnr} 

\end{document}